\newtheorem{theorem}{Theorem}
\newtheorem{lemma}[theorem]{Lemma}
\newtheorem{definition}[theorem]{Definition}
\newtheorem*{remark}{Remark}
\newcommand{\R}{\mathbb{R}}
\newcommand{\sL}{\mathcal{L}}
\newcommand{\sD}{\mathcal{D}}
\tikzstyle{basicbox} = [rectangle, rounded corners, minimum width=1cm, minimum height=1cm,text centered, draw=black, fill=white!30]
\tikzstyle{fakeNode} = [rectangle, rounded corners, minimum width=1cm, minimum height=1cm,text centered, draw=white, fill=white!30]
\tikzstyle{arrowDet} = [thick,->,>=stealth]
\tikzstyle{arrowProb} = [thick,dash pattern=on 2pt off 2pt,->,>=stealth]
\author{Jugal Garg\thanks{University of Illinois Urbana-Champaign}\\ \texttt{\small jugal@illinois.edu}
\and
Vishnu V. Narayan\footnotemark[2] \\ \texttt{\small vishnu.narayan@mail.mcgill.ca}
\and 
Yuang Eric Shen\footnotemark[2]\\ \texttt{\small yuangs3@illinois.edu
}}
\date{}
\title{Designing Truthful Mechanisms for Asymptotic Fair Division\thanks{Work supported by NSF Grant CCF-2334461.}}
\begin{document}

\maketitle

\begin{abstract}
We study the problem of fairly allocating a set of $m$ goods among $n$ agents in the asymptotic setting, where each item's value for each agent is drawn from an underlying joint distribution. Prior works have shown that if this distribution is well-behaved, then an envy-free allocation exists with high probability when $m=\Omega(n\log{n})$ (\citet{dickerson2014}). Under the stronger assumption that item values are independently and identically distributed (i.i.d.) across agents, this requirement improves to $m=\Omega(n\log{n}/\log{\log{n}})$, which is tight (\citet{manurangsi2021closing}). However, these results rely on non-strategyproof mechanisms, such as maximum-welfare allocation or the round-robin algorithm, limiting their applicability in settings with strategic agents. 

In this work, we extend the theory to a broader, more realistic class of joint value distributions, allowing for correlations among agents, atomicity, and unequal probabilities of having the highest value for an item. We show that envy-free allocations continue to exist with a high probability when $m=\Omega(n\log{n})$. More importantly, we give a new randomized mechanism that is truthful in expectation, efficiently implementable in polynomial time, and outputs envy-free allocations with high probability, answering an open question posed by \citet{manurangsi2017asymptotic}. We further extend our mechanism to settings with asymptotic weighted fair division and multiple agent types and good types, proving new results in each case.
\end{abstract}

\section{Introduction} \label{sec:intro}

The question of how to fairly divide a collection of items among a group of agents prominently appears in societal contexts. It arises in settings such as the division of inherited estates, the allocation of computational resources among competing tasks, and the assignment of limited seats in university courses among students. In each case, fairness in treatment is a fundamental concern. This question, formalized in the literature as the fair division problem, has received significant recent attention (see, e.g., the survey of \citet{amanatidis2023fair}).

Among the various notions of fairness, \textit{envy-freeness} plays a central role. An allocation is said to be envy-free (EF) if no agent prefers another agent's bundle over their own. Envy-free allocations are known to exist under mild assumptions when items can be divided continuously (\citet{dubins1961cut,stromquist1980cut,brams1995envy}). However, in the indivisible setting, in which each item must be allocated to exactly one agent, envy-freeness is not always achievable. For example, when one valuable item must be allocated between two agents, any allocation inevitably causes envy in the agent who does not receive it.

This raises a fundamental question: When does an envy-free allocation exist? On the one hand, a simple reduction from the \emph{partition} problem shows that even when agents have identical additive valuation functions, deciding whether a given instance admits an envy-free allocation is NP-hard. However, envy-freeness appears to be quite prevalent in real-world instances. For example, on the popular nonprofit platform \texttt{spliddit.org} (\citet{goldman2015spliddit}), \citet{bai2022fair} reports that over 70\% of the submitted instances admit an EF allocation. Moreover, non-existence of such allocations seems primarily to arise in small instances; in cases where the number of items is at least three times the number of agents, 93\% of instances have an EF allocation. It is therefore important to gain an understanding of the structure of instances that admit an EF allocation, and identify conditions under which such allocations can be found efficiently. 

Intuitively, following the discussion above, one might expect that instances with many high-valued items and diverse agent valuations are likely to admit EF allocations. Motivated by this, a major focus in the literature has been on identifying sufficient conditions under which EF allocations are guaranteed to exist. In this direction, \citet{dickerson2014} initiated the study of asymptotic fair division. They considered settings with additive valuations, in which each item's value is independently drawn from a `well-behaved' joint distribution across all agents, and where each agent has equal probability of having the highest value for each item. They showed that for any number of agents $n$, if the number of items is large relative to $n$ (specifically, $m=\Omega(n\log{n})$), then with high probability, as $m \rightarrow \infty$, an EF allocation exists. Moreover, such an allocation can be obtained by a simple greedy algorithm: assign each item to an agent who values it most. In subsequent work, \citet{manurangsi2020envy} showed in a similar asymptotic setting that if $n$ does not divide $m$, then even $m=\Theta(n\log{n}/\log\log{n})$ items may be insufficient for envy-freeness. This gap was later closed by \citet{manurangsi2021closing}, who showed that, under stronger distributional assumptions, $m=\Omega(n\log{n}/\log\log{n})$ items suffice when the allocation is made using the round-robin procedure.

Although these results significantly advance our understanding of the above questions, they come with important limitations. First, while both welfare maximization and the round-robin mechanism are simple and efficient, they lack a key property in mechanism design: \emph{truthfulness}. A mechanism is truthful (or strategyproof) if no agent can benefit by misreporting their valuation. In fact,  \citet{manurangsi2017asymptotic} posed the existence of a truthful mechanism that ensures envy-freeness in the asymptotic setting as an open problem. Second, the assumptions made on the underlying distributions in previous works are rather restrictive. For instance, \citet{dickerson2014} assumes that each agent has an equal probability ($1/n$) of being the highest valuing agent for each item. And,  \citet{manurangsi2021closing} requires that item values are independent and identically distributed across agents, with a density bounded between constants. Finally, while these works examine the behavior of `typical' instances drawn from a well-behaved distribution, they offer little insight into the structure of a specific instance once realized. This raises another natural question: given a particular instance, can we efficiently verify whether it contains sufficiently many high-valued items and sufficiently diverse agent valuations to ensure the existence of an EF allocation?

\subsection{Our Results} \label{subsec:results}
In this work, our main result is a new randomized mechanism, the Proportional Response with Dummy (PRD) Mechanism, that addresses these concerns.

\textbf{Main Result (informal).}
    \textit{The PRD Mechanism for asymptotic fair division is truthful in expectation, runs in polynomial time, and outputs an envy-free allocation in typical asymptotic instances (i.e., with high probability), provided that $m=\Omega(n\log{n})$.}

Our contributions in this paper have several novel aspects.
\begin{itemize}
    \item Our main result, the new PRD Mechanism, improves upon the state of the art in two key ways. First, we impose very mild assumptions on the underlying distribution of item values. Our asymptotic setting allows for correlation between agent values and is strictly more general than that of \citet{manurangsi2021closing} and the related work of \citet{bai2022envy}. It also allows for many broad properties that \citet{dickerson2014} excluded, such as atomicity and unequal probabilities of having the highest value for an item. Secondly, unlike the mechanisms employed in these works, the PRD Mechanism is \textit{truthful in expectation}: no agent can obtain a bundle of greater expected value by misreporting its valuation function. This is a surprising and rare positive result in light of the large collection of negative results on the impossibility of truthfulness in many fair division settings (e.g.~\citet{lipton2004approximately,caragiannis2009low,amanatidis2017truthful}). Furthermore, the PRD Mechanism is polynomial-time implementable. 
    \item We introduce the use of the \textit{Kullback–Leibler (KL) divergence}, a statistical measure of the distinctness between two distributions, in the context of asymptotic fair division, connecting the KL divergence between appropriately normalized valuations to the envy-margin between the agents. Informally, we show that, after rounding up values very close to 0 to a common lower bound, if the resulting KL divergences between each pair of normalized valuations is high, then there exists a fractional allocation with a high envy margin for all pairs of agents.
    \item Finally, we study the implications of our results for other fair division settings. We show that our results extend to the setting of asymptotic weighted fair division, in which agents have varying entitlements, simultaneously generalizing previous existence results while maintaining truthfulness in that setting. One consequence of our result is that it extends to cases in which agents may have weights that grow with $n$. For instance, even if a government agency or union is entitled to a constant fraction of the total utility (i.e., has weight linear in $n$), our result implies $\Theta(n\log(n))$ items can suffice. We also present a truthful mechanism for asymptotic fair division for \textit{groups} with weights, introduced in \citet{manurangsi2017asymptotic}. Finally, we consider multiple \textit{agent types} and \textit{good types}, introduced in \citet{Gorantla_2023}, and provide interesting new results.
\end{itemize}

We remark that deciding whether an envy-free allocation exists is computationally NP-hard, so no polynomial-time mechanism can always find an envy-free allocation whenever one exists. Similarly, no deterministic truthful mechanism can always find an envy-free allocation when one exists (\citet{lipton2004approximately}).

\subsection{Additional Related Work} \label{sec:related}

The asymptotic fair division problem was introduced by \citet{dickerson2014}, who showed that EF allocations exist with high probability when $m = \Omega(n \log{n})$. The existence of envy-free allocations in the asymptotic setting for goods has also been considered in a series of works by \citet{manurangsi2017asymptotic,manurangsi2020envy,manurangsi2021closing} and by \citet{bai2022envy}. Recently, \citet{manurangsi2025chores} extended this line of research to the division of chores, and \citet{manurangsi2025weighted} considered the setting where the agents have weights.

Other fairness notions, such as the maximin share, proportional share, and approximations thereof, have also been considered in the asymptotic setting (see, e.g., the works of \citet{amanatidis2015approximation,kurokawa2016can,amanatidis2016truthful,suksompong2016asymptotic,farhadi2019fair}). In other work, \citet{yokoyama2025asymptotic} study the asymptotic existence of what they call `class envy-free matchings', \citet{benade2024fair} study asymptotic fair division in the online setting, \citet{bai2022fair} investigate a `smoothed' utility model with perturbed valuations, and \citet{benade2024existence} study a related stochastic setting in which item labels are randomly shuffled for each agent.

Truthful mechanisms for fair division have also been considered in the non-asymptotic setting. In a seminal work, \citet{lipton2004approximately} showed that no truthful mechanism outputs an envy-free allocation whenever one exists. Similar impossibility results were published in \citet{caragiannis2009low} and \citet{amanatidis2017truthful}. A recent paper of \citet{bu2025truthful} studies the existence of mechanisms that are truthful in expectation and that output allocations that are approximately envy-free.

\section{Preliminaries} \label{sec:preliminaries}

An instance of the fair division problem consists of a set $N = \{1, \ldots, n\}$ of agents represented by indices in $[n]$, a set $M = \{1, \ldots, m\}$ of indivisible items represented by indices in $[m]$, and a valuation profile $(v_i)_{i \in N}$. We assume throughout that the valuation functions are additive, i.e., that the value of a set of items is the sum of the items' singleton values. Our main focus is on the case of goods, so we represent agent~$i$'s valuation as a vector ${v_i}$ that assigns a non-negative real value $v_{ij}$ for each item $j$. The total value $v_i^T S$ for any bundle of items $S\in\{0,1\}^m$ is equal to $\sum_{j: S_j=1} v_{ij}$. An integral allocation $A = (A_1, \ldots, A_n)$ of items to agents is a partition of the items into $n$ disjoint bundles, where agent $i$ gets bundle $A_i \in \{0,1\}^M$ and obtains value $v_i^TA_i$. While our main results concern integral allocations, our analysis also includes \textit{fractional} allocations in which each item may be assigned to multiple agents in fractional amounts. We say agent $i$ has a fractional allocation $x_i \in [0,1]^m$, where $x_{ij} \in [0,1]$ is the fraction of good $j$ assigned to agent $i$, and the value of this allocation is $v_ix_i$. We assume all allocations are complete, i.e., $\sum_{i \in N} x_{ij} = 1$ for each item $j$ and no item is left (partially) unassigned.

For the analysis of our mechanism, we often use \textit{normalized} valuations, which are scaled for each agent so that the sum of their item values equals 1. We represent these normalized valuations by $\bar v$, i.e. $\bar v_{ij} = v_{ij} / \sum_{j\in M}v_{ij}$. We say an allocation $A$ is {\em envy-free} if
\[ \bar v_i^T A_i \geq \bar v_i^T A_k \quad \forall i,k \in N.\]
We also define the {\em envy margin} of agent $i$ with respect to agent $k$, for a given allocation $A$, as
\[EM_{ik} = \bar  v_i^T A_i -\bar  v_i^T A_k.\]
Observe that an agent has no envy if its envy margin with respect to every other agent is non-negative, and that an allocation is envy-free if this is true for all agents.  For a given fractional allocation $x$, we denote by $fEM_{ik}$ the fractional envy margin of agent $i$ with respect to agent $k$, which for our analysis is computed with the normalized valuations as $\sum_j \left[\bar v_{ij}^T  x_{ij} - \bar v_{ij}^T x_{kj}\right]$. Additionally, we say that an event occurs \textit{with high probability} if it occurs with probability at least $1 - o(1)$ in $m$, i.e., almost surely as $m$ goes to infinity.

\subsection{Asymptotic Fair Division}
The asymptotic fair division problem aims to understand the conditions under which a fair allocation exists asymptotically, i.e., with high probability, as the number of items grows. In this setting, for every item $j\in M$, the values $v_{1j}, \ldots, v_{nj}$ are drawn from a joint distribution $\sD$ over $[0,1]^n$. Here, we allow correlation among agents' values for a given item, but these values are independent across different items. We also denote the marginal distribution for each agent by $\sD_i$, and the mean of $\sD_i$ by $\mu_i$.

We impose two minor and intuitive assumptions on $\sD$. The first is that there is some constant $\mu_l > 0$ such that for all agents $i$, $\mu_i \geq \mu_l$. In other words, we do not allow the marginal distribution means to approach $0$ as more agents are added to an instance. Another way of viewing this assumption is that the most valuable good for an agent is not arbitrarily larger than the average good. The second assumption is that the \textit{expected absolute difference} between any two agents' values for a given item, normalized by their means, is at least a positive constant. Specifically, we require that there exists some $\delta \in (0,1)$ such that

\[\forall i \neq k, E\bigg [ \bigg | \frac{v_i}{\mu_i}  -\frac{v_k}{\mu_k}\bigg | \bigg ] \geq \delta.\]

This assumption is readily satisfied when agent valuations are i.i.d. and distributions are not fully concentrated at their means. More generally, it is true when agents have, on average, at least constant disagreement on the value of each good, which intuitively is important for the existence of an envy-free allocation.

\subsubsection{KL Divergence.} Given two discrete distributions $P$ and $Q$ over a discrete set $S$ that each sum to 1, the Kullback–Leibler (KL) divergence, or relative entropy, is a widely-used statistical measure of the difference between the two distributions. It is defined as
\[D_{KL}(P\|Q) = \sum_{s\in S} P(s)\cdot(\log(P(s)) - \log(Q(s))).\]

\subsubsection{Truthfulness in Expectation.} A mechanism is truthful (or strategyproof) if no agent can benefit by misreporting its true valuation function. Formally, a mechanism with allocation rule $\mathcal{A}$ is truthful in expectation if, for every bidder $i$, true valuation function $v_i$, reported valuation function $\hat v_i$, and reported valuation functions $\hat v_{-i}$ of the other agents,
\[\mathbb{E}[v_i(\mathcal{A}(v_i,\hat v_{-i}))] \geq \mathbb{E}[v_i(\mathcal{A}(\hat v_i, \hat v_{-i}))]\]
where the expectation is over the mechanism's randomness.

\section{Well-Behaved Distributions} \label{sec:well-behaved}
The early work of \citet{dickerson2014} showed that if item values are drawn from a well-behaved joint distribution independently for each item, then an envy-free allocation exists with high probability as $m\rightarrow\infty$ when $m=\Omega(n\log{n})$. Informally, \citet{dickerson2014} considers a distribution well-behaved if it is non-atomic, each agent has the same probability (i.e. $1/n$) of having the highest value for each item, and the expected value of an item conditioned on the event that an agent has the highest value for it is strictly larger than its expected value otherwise. In a subsequent paper, \citet{manurangsi2021closing} considered the case where the value of each item is additionally independent and identically distributed for every agent, and showed an improvement to the previous upper bound to $m=\Omega(n\log{n}/\log\log{n})$. This result also assumes that the density function of the distribution underlying each item's value is $(\alpha,\beta)$-bounded, i.e., bounded between two constants $\alpha,\beta$. Follow-up work by \citet{bai2022envy} slightly generalizes the assumptions of \citet{manurangsi2021closing} to the case where agents are non-identical, and their $(\alpha,\beta)$-bounded distributions may be supported on a different sub-interval of $[0,1]$ for each agent.

In this work, we allow for atomicity of this distribution and correlation across agents. Our only assumptions, as stated earlier, are that $\forall i \in N$, $\mu_i \geq \mu_l$, and that there exists some $\delta>0$ such that $\forall i \neq k$, $E[|\frac{v_i}{\mu_i}-\frac{v_k}{\mu_k}|] \geq \delta$. These assumptions are strictly more general than those of \citet{bai2022envy} and therefore \citet{manurangsi2021closing}. To see this, for our first assumption, $\mu_i \geq \mu_l > 0$, we note that the upper bound $\beta$ implies a minimum interval length of $\frac{1}{\beta}$, over which the PDF $f_{\sD_i}$ is $\geq \alpha$, therefore $\mu_l = \frac{\alpha}{2\beta}$ suffices. For our second assumption, $\mu_i,\mu_k \leq 1$ implies that the PDFs of $v_i/\mu_i$ and $v_k/\mu_k$ are both bounded above by $\beta$ over their domain, which implies that $\big | \frac{v_i}{\mu_i} - \frac{v_k}{\mu_k} \big | \geq \frac{1}{4\beta}$ with probability at least $\frac{1}{2}$, and thus $\delta = \frac{1}{8\beta}$ suffices to satisfy the second assumption.  

\section{Technical Overview of the PRD Mechanism} \label{sec:overview}
In the following sections we present our mechanism, the PRD Mechanism, and its analysis. The PRD Mechanism operates in two phases. In the first phase, it collects reports from the agents that represent their valuation functions, and internally constructs a profile of `bids' for every agent-item pair. The mechanism then uses these bids to construct a fractional allocation, which has a high envy margin for typical instances (i.e., with high probability). Notably, we will also show that the assigned bids are agent-optimal, so no agent can increase the value of their fractional bundle by misreporting. In the second phase, the PRD Mechanism employs a randomized rounding scheme to round this allocation and obtain an integral final allocation. This rounding step ensures that the expected value of the final allocation for each agent equals its value for the fractional allocation, maintaining truthfulness in expectation. Critically, the mechanism guarantees that for typical instances the fractional envy margin prior to the rounding step is high enough that the resulting integral allocation is envy-free with high probability.

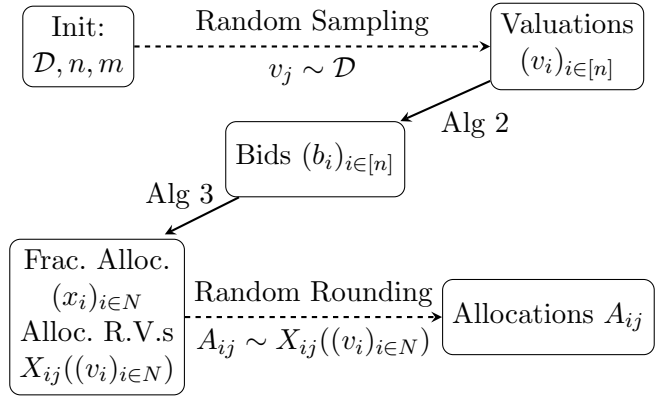
\begin{figure}[h]
    \centering
    \begin{tikzpicture}
    \node (Init) [basicbox, align=center] {Init:\\$\mathcal{D}, n, m$};
    \node (Vi) [basicbox, right of=Init, node distance = 6.5cm, align=center] {Valuations\\$(v_i)_{i \in [n]} $};
    \node (FakeNode1) [fakeNode,  below of=Init, node distance = 1.5cm] {};
    \node (FakeNode2) [fakeNode,  right of=FakeNode1, node distance = 0.25cm] {};
    \node (Bids) [basicbox, right of=FakeNode2, node distance = 2.9cm] {Bids $(b_i)_{i \in [n]} $};
    \node (FracAlloc) [basicbox, below of=FakeNode2, node distance = 2.1cm, align=center] {Frac. Alloc.\\$(x_i)_{i \in N}$\\Alloc. R.V.s\\$X_{ij}((v_i)_{i \in N})$} ;
    \node (Alloc) [basicbox, right of=FracAlloc, node distance = 6cm ] {Allocations $A_{ij}$};
    \draw [arrowProb] (Init) --  node[anchor=south] {Random Sampling}  node[anchor=north] {$v_j \sim \mathcal{D}$}  (Vi);
    \draw [arrowDet] (Vi) --  node[anchor=north west,shift={(-2mm,0mm)}] {Alg 2}  (Bids);
    \draw [arrowDet] (Bids) -- ++ (-2,-1)  node[anchor=south,shift={(2mm,2mm)}] {Alg 3}  (FracAlloc);
    \draw [arrowProb] (FracAlloc) --  node[anchor=south] {Random Rounding}  node[anchor=north] {$A_{ij} \sim X_{ij}((v_i)_{i \in N})$}  (Alloc);
    \end{tikzpicture}
    \caption{Asymptotic Setting and Mechanism Overview. Dashed lines represent randomization, and solid lines represent deterministic algorithms.}
    \label{fig:enter-label}
\end{figure}

The following two subsections present a high-level overview of the ideas used in the development of the PRD Mechanism. The formal mechanism and its analysis are presented in Section~\ref{sec:main}, and several extensions of this mechanism appear in Section~\ref{sec:extensions}.

\subsection{Truthfulness: The Dummy Agent} \label{subsec:truthful}

Our goal for this section is to introduce, in an intuitive manner, the design of the first phase: a truthful mechanism that finds a fractional allocation with large envy margin (that will later be rounded to obtain an integral allocation).

Our analysis of the PRD Mechanism conceptually assigns equal `budgets' to the agents, which they effectively distribute across the items as their `bids'. However, since we desire truthful reporting of the agents' valuations, each agent $i$ will report values $v_{ij} \in [0,1]$ for every item $j\in[m]$, and the mechanism will internally translate these values into bids $b_{ij}$. Reported values must be within the domain of $\sD$ (i.e., in $[0,1]$), but have no sum constraint. The bids constructed by the mechanism, however, are constrained to sum to 1 for each agent. We note that even when values $v_{ij}$ are generated independently across the items for agent $i$, the normalized values $\bar v_{ij}$ are not independent. How should the mechanism construct these bids in order to preserve truthful reporting?

For a first attempt, suppose the agents receive an allocation of each item proportional to their constructed bid for that item, i.e., agent~$i$ receives $x_i$ with $x_{ij} =  b_{ij}/\sum_{k\in[n]}  b_{kj}$ for each good $j$. An example demonstrating this allocation function for an instance with two agents and two items is shown in Table~\ref{tab:bidsallocs}.

\begin{table}[h]
    \centering
    \begin{tabular}{c|c|c}
         & \multicolumn{2}{c}{Bids}  \\
         \hline 
         Agent~1 & 0.2 & 0.8 \\ 
         Agent~2 & 0.4 & 0.6 \\ 
    \end{tabular}
    $\implies$
    \begin{tabular}{c|c|c}
         & \multicolumn{2}{c}{Allotment}  \\
         \hline 
         Agent~1 & 1/3 & 4/7 \\ 
         Agent~2 & 2/3 & 3/7 \\ 
    \end{tabular}
    
    \caption{Proportional allotment}
    \label{tab:bidsallocs}
\end{table}

If we take any solution in which agent~$i$ does not want to deviate, then we necessarily have for any items $j,j'$ for which $x_{ij}>0$ and $x_{ij'}>0$ the first order optimality condition, i.e. that $\bar v_j \cdot \frac{\partial}{\partial \bar b_{ij}} x_{ij} = \bar v_{j'} \cdot \frac{\partial}{\partial \bar b_{ij'}} x_{ij'}$. However, strategic interactions in this game make optimal bids difficult to analyze. To address this, we will use the fact that when given a partial allocation with a certain envy margin, allocating the remaining portion of all goods by dividing it equally across the agents does not change the envy margin. Following from this observation, we can introduce a dummy agent into the above system whose bid is always $n - \sum_i { b_{ij}}$ for each good $j$. Now the items are allocated using the same allocation function as before, but after this step, the dummy agent's share is divided uniformly among the other $n$ agents. Introducing the dummy agent to the previous example yields the result in Table~\ref{tab:bidsallocsdummy}.

\begin{table*}[h]
    \centering
    \begin{tabular}{c|c|c}
         & \multicolumn{2}{c}{Bids}  \\
         \hline 
         Agent~1 & 0.2 & 0.8 \\ 
         Agent~2 & 0.4 & 0.6 \\ 
         Dummy & 1.4 & 0.6 \\ 
    \end{tabular}
    $\implies$
    \begin{tabular}{c|c|c}
         & \multicolumn{2}{c}{Interim Allotment}  \\
         \hline 
         Agent~1 & \hspace*{3mm} 0.1 \hspace*{3mm} & 0.4 \\ 
         Agent~2 & 0.2 & 0.3 \\ 
         Dummy & 0.7 & 0.3 \\ 
    \end{tabular}
    $\implies$
    \begin{tabular}{c|c|c}
         & \multicolumn{2}{c}{Final Allotment}  \\
         \hline 
         Agent~1 & \hspace*{1.25mm} 0.45 \hspace*{1.25mm} & 0.55 \\ 
         Agent~2 & 0.55 & 0.45 \\ 
    \end{tabular}
    \caption{Proportional allotment with dummy agent}
    \label{tab:bidsallocsdummy}
\end{table*}

Observe that because the dummy agent ensures the sum of bids equals $n$ (in our example $n=2$), the allotment at the intermediate stage is exactly the bid divided by $n$. In the final allocation, we can see that with bids $b_{ij}$, agent $i$ is assigned

\[x_{ij} = \frac{ b_{ij}}{n} + \frac{1}{n}\cdot\frac{n-\sum_i  b_{ij}}{n}.\]

With this allocation, $\bar v_{ij} \cdot \frac{\partial}{\partial \bar b_{ij}} x_{ij} = \bar v_{ij}/n - \bar v_{ij}/n^2$ at any bid profile, therefore an agent does not care about the other agents' bids when it maximizes its own value. This means that the introduction of the dummy agent makes this game separable across agents, and each agent will optimize separately in its own equivalent single-agent game. Additionally, the intermediate and final fractional allocations have the same envy margin for any pair of agents, since the allotment of each item changes by the same quantity for every agent after the dummy agent's share is divided among the other agents.

How would an agent want to bid in the above game? When bids translate linearly to allotments, agents will want to bid their entire allowance on the items they value the most. This means agents may disagree on the value of most items, and yet, if they agree on their best items, we do not obtain any envy margin in the fractional allocation. We therefore require a mechanism that ensures that substantially different valuations always map to substantially different fractional allocations. Achieving this requires a few modifications to this mechanism. First, we introduce a function $f: [0,1] \rightarrow \R^+$, and modify the allocation function such that agents receive an allotment of good $j$ proportional to $f(b_i)$ instead of $b_i$. The dummy agent will similarly always bid $n\cdot f(1) - \sum_i {f(b_{ij})}$ on good $j$, which continues to have the effect of making the game separable across agents. Specifically, for agent $i$, for some function $F$ we have

\[x_{ij} = \frac{n-1}{n^2 f(1)} f(b_{ij}) + F_j(b_{-i}).\]

Suppose for now that the bids are equal to the normalized valuations $\bar v := \frac{v_{ij}}{\sum_j v_{ij}}$. When $f$ induces truthful reporting, distinct values naturally correspond to distinct bids. Given normalized values $\bar v_{ij}$ and $\bar v_{ij'}$, we require

\[\bar v_{ij} f'(\bar v_{ij}) = \bar v_{ij'} f'(\bar v_{ij'})\]

whenever $f'$ exists for both values. The simplest way to achieve this is to require $f'(\bar v_{ij}) = \frac{1}{\bar v_{ij}}$, which gives us $f(b_{ij}) = \log(b_{ij}) + c$, the injective function we will use. 

At this stage, an obvious problem is that $\log(b_{ij}) + c$ may be negative. There is also a less obvious problem, which is that the dummy agent will have to bid a huge amount to cover all scenarios, since $\bar v_{ij}$ is typically about $\approx \Theta(\frac{1}{m})$ but can be as high as 1 in the worst case. We solve these problems by imposing a floor and ceiling on constructed bids. Specifically, we demand that each bid $b_{ij}$ is restricted to some interval $[b_{min}, b_{max}]$ that we will specify later. We define the projection $proj(\bar v_{ij}) = min\{max\{\bar v_{ij}, b_{min}\}, b_{max}\}$ of $\bar v_{ij}$ into the interval $[b_{min}, b_{max}]$. Then, in order to ensure that the sum of $proj(\bar v_{ij})$ values matches the budget constraint, we will multiply $\bar v_i$ by some \textit{scale factor} $s_i$. We will show that for our choice of interval, with high probability there exists some $s_i > 0$ that ensures $\sum_j proj(s_i\bar v_{ij}) = 1$, giving us our optimal bids. Importantly, for typical instances applying this scaling and projection step respects the optimality conditions for each agent, maintaining the optimality of truthful reporting.

To ensure that constructed bids are nonnegative, we choose $c = -\log (b_{min})$. We will also define $C := f(b_{max})$, and design the dummy agent such that the agents now receive a proportional allotment out of a total of $nC$. Then, the allocation to agent $i$ becomes
\[x_{ij} = \frac{\log(b_{ij}) + c}{nC} + \frac{nC - \sum_k \big (\log( b_{kj}) + c \big )}{n^2C}.\]

\subsection{Envy-freeness: Envy Margin via KL Divergence} \label{subsec:envymargin}

In this section, we explain the connection between the envy margin obtained by our mechanism  at the end of the first phase and the KL divergence between the agents' valuations. Specifically, if the KL divergence is high, then the resulting fractional allocation has a high envy margin. In Section~\ref{sec:main} we will show that this allows us to round the fractional allocation into an integral allocation that is envy-free.

We begin by introducing variations of two well-known results. Recall that the KL divergence between two $m$-dimensional probability vectors $b_i$ and $b_k$ is

\[D_{KL}(b_i \| b_k) = \sum_j b_{ij} \log(b_{ij}) -b_{ij} \log(b_{kj}).\]

In our case, we are concerned with the KL divergence between the bid vectors of agents. This value may be bounded via the well-known \textit{Pinsker's Inequality}, which can be stated as follows for discrete probability vectors.

\begin{lemma}[Pinsker's Inequality \cite{kullback1967lower,csiszar1967information}]\label{lemma:pinsker}

Let $p, q$ be two $m$-dimensional probability vectors. Then

\[\delta_{TV}(p,q) \leq \sqrt{\frac{1}{2}D_{KL}(p\| q)}\]

where $ \delta_{TV}(p,q) $ is the total variation distance, and can be calculated as

\[ \delta_{TV}(p,q) = \sum_{j \in [m]} \frac{1}{2} \bigg | p(j) - q(j)\bigg |.\]
    
\end{lemma}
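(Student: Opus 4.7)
The plan is to prove Pinsker's Inequality by the classical two-step route: first establish the binary (two-outcome) version, then reduce the general $m$-dimensional case to the binary case via a partition/log-sum argument.

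For the binary case, I would fix $q \in (0,1)$ and study the function
\[g(p) = p\log\frac{p}{q} + (1-p)\log\frac{1-p}{1-q} - 2(p-q)^2\]
on $p \in [0,1]$. Direct computation gives $g(q) = 0$ and $g'(q) = 0$, while the second derivative evaluates to $g''(p) = \frac{1}{p(1-p)} - 4$, which is nonnegative since $p(1-p) \leq 1/4$. Hence $g$ is convex and minimized at $q$, so $g \geq 0$ on $[0,1]$, which is precisely the binary Pinsker inequality.

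For the reduction, I would set $A = \{j \in [m] : p(j) \geq q(j)\}$. A standard rewriting gives $\delta_{TV}(p,q) = p(A) - q(A)$. Applying the log-sum inequality (equivalently, the convexity of $x \mapsto x \log x$) separately on $A$ and on $A^c$ yields
\[\sum_{j \in A} p(j)\log\frac{p(j)}{q(j)} \geq p(A)\log\frac{p(A)}{q(A)},\]
and the analogous bound on $A^c$. Summing shows
\[D_{KL}(p \| q) \geq p(A)\log\frac{p(A)}{q(A)} + (1 - p(A))\log\frac{1 - p(A)}{1 - q(A)}.\]
Combining this with the binary case applied to the parameters $p(A)$ and $q(A)$ gives $D_{KL}(p\|q) \geq 2(p(A) - q(A))^2 = 2\,\delta_{TV}(p,q)^2$, which rearranges to the claimed inequality.

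The main obstacle is really the binary inequality, since the reduction step is mechanical once the log-sum inequality is invoked. Within the binary step, one must also handle the degenerate cases carefully: indices with $p(j)=0$ contribute $0$ to the KL sum under the convention $0\log 0 = 0$, and if there is any $j$ with $q(j)=0$ and $p(j)>0$ then $D_{KL}(p\|q) = +\infty$ and the inequality is trivial; similarly, $q(A) \in \{0,1\}$ requires interpreting the binary bound as a limit. Modulo these standard conventions, the argument goes through cleanly.
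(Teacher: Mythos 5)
Your proof is correct: the binary case via the second-derivative computation $g''(p)=\frac{1}{p(1-p)}-4\ge 0$ together with $g(q)=g'(q)=0$, followed by the reduction to two outcomes through the set $A=\{j: p(j)\ge q(j)\}$ and the log-sum inequality, is the standard and complete argument, and your handling of the degenerate cases ($0\log 0=0$, infinite divergence, $q(A)\in\{0,1\}$) is appropriate. Note that the paper itself does not prove this lemma --- it is stated as a known result with citations to \citet{kullback1967lower} and \citet{csiszar1967information} --- so there is no in-paper proof to compare against; your argument is the canonical one and would serve as a valid self-contained justification.
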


In addition, we will also use the Chernoff bounds on sums of independent $[0,1]$-valued variables. We slightly modify their statement to include a scale factor $z$, as follows.

\begin{lemma}[Chernoff Bounds]\label{lemma:chernoff}
Let $X_1, \dots, X_k$ be independent random variables valued on $[0,z]$, and let $S:=\sum_{i\in[k]} X_i$. Then, for any $\epsilon \in (0,1]$,
\begin{align*}
&Pr[S \leq (1-\epsilon)E[S]] \leq exp\bigg(\frac{-\epsilon^2}{2z} E[S]\bigg), \text{and}\\
&Pr[S \geq (1+\epsilon)E[S]] \leq exp\bigg(\frac{-\epsilon^2}{3z} E[S]\bigg).
\end{align*}
\end{lemma}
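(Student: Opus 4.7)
The plan is to reduce both inequalities to the familiar $[0,1]$-valued multiplicative Chernoff bound by a simple rescaling, which is the cleanest way to see where the $1/z$ factor in the exponent comes from. Define $Y_i := X_i/z$ for each $i \in [k]$; then each $Y_i$ lies in $[0,1]$, the $Y_i$ inherit mutual independence from the $X_i$, and the sum $T := \sum_{i \in [k]} Y_i$ satisfies $T = S/z$ and $E[T] = E[S]/z$. The key observation is that the events $\{S \leq (1-\epsilon) E[S]\}$ and $\{T \leq (1-\epsilon)E[T]\}$ coincide (and likewise for the upper tail), because scaling both sides of the inequality by the positive constant $1/z$ changes nothing.

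Next, I would invoke the standard multiplicative Chernoff inequalities for sums of independent $[0,1]$-valued random variables, which give
\[ \Pr[T \leq (1-\epsilon) E[T]] \leq \exp\!\bigl(-\tfrac{\epsilon^2}{2} E[T]\bigr), \qquad \Pr[T \geq (1+\epsilon) E[T]] \leq \exp\!\bigl(-\tfrac{\epsilon^2}{3} E[T]\bigr). \]
Substituting $E[T] = E[S]/z$ on the right-hand sides produces exactly the two bounds in the statement. Independence of the $Y_i$ is needed precisely so that the moment generating function of $T$ factorizes inside the underlying derivation.

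I expect no real obstacle: the rescaling is routine, and the only non-trivial ingredient is the classical $[0,1]$-case itself, which is a textbook result that we may quote. If one preferred a fully self-contained argument, the same proof specializes to $X_i \in [0,z]$ directly: apply Markov to $e^{tS}$, use the convexity bound $e^{tX_i} \leq 1 + (e^{tz}-1)X_i/z$ to obtain $E[e^{tS}] \leq \exp\!\bigl((e^{tz}-1) E[S]/z\bigr)$, optimize by setting $t = \log(1+\epsilon)/z$ for the upper tail and $t = -\log(1-\epsilon)/z$ for the lower tail, and finish with the standard elementary estimates $(1+\epsilon)\log(1+\epsilon) - \epsilon \geq \epsilon^2/3$ and $(1-\epsilon)\log(1-\epsilon) + \epsilon \geq \epsilon^2/2$ valid on $\epsilon \in (0,1]$. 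Either route yields the stated bounds, and the rescaling route is the one I would write up for brevity.
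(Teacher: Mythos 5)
Your rescaling argument is correct and is exactly the standard way to obtain the $1/z$ factor; the paper states this lemma without proof as a routine modification of the classical $[0,1]$-valued Chernoff bounds, which is precisely what your reduction establishes. Both your short route (substituting $Y_i = X_i/z$ into the textbook bounds) and your self-contained MGF sketch are sound, so there is nothing to add.
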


Using these lemmas, we can outline our proof. Recall that each agent receives the fractional allocation $x_{ij} = \frac{\log(b_{ij}) + c}{nC} + \frac{nC - \sum_k (\log( b_{ij}) + c)}{n^2C}$. The second term is common for all agents and can be ignored when computing envy. Thus, the fractional envy margin is $fEM_{ik} = \frac{\sum_j  \bar v_{ij} \log(b_{ij}) - \bar v_{ij} \log(b_{kj})}{nC}$. We observe that the numerator is approximately the KL divergence between bids $D_{KL}(b_i \| b_k) = \sum_j b_{ij} \log(b_{ij}) -b_{ij} \log(b_{kj})$.

This gives us the core idea for our proof. First, we will show that bids are approximately equal to normalized valuations, which will let us use Pinsker's Inequality to show that the KL divergence between bids is at least $\delta^2/4$ with high probability. Then, we will show that $fEM_{ik}$ differs from the scaled KL divergence by at most a function of the lower bound $b_{min}$ that can be made relatively small compared to $\delta$. Together, this will show that when $b_{min}$ is small, with high probability $fEM_{ik}$ is at least $\frac{\delta^2}{4nC}$ for all pairs of agents $i,k$.

\subsection{Typicality}

To formalize this argument, we require a constant $\epsilon$ sufficiently smaller than $\delta$. It suffices to choose $\epsilon = \frac{1}{25}\delta$. We restrict our focus to valuations that do not deviate too far from their expected behavior. We define \textit{typicality} as follows.

\begin{definition} Let $(v_i)_{i \in N}$ be a profile of valuations. We say that $(v_i)_{i \in N}$ is \textit{typical} if, for positive constant $\epsilon < \frac{\delta}{25}$,
\begin{enumerate}[leftmargin=8mm]
    \item[T1.] $\forall i$, $\sum_j v_{ij} \in [(1-\epsilon)m\mu_i, (1+\epsilon)m\mu_i]$, and 
    \item[T2.] $\forall i \neq k$, $\sum_j |\frac{v_{ij}}{\mu_i} - \frac{v_{kj}}{\mu_k}| \geq  (1-\epsilon)\delta m$.
\end{enumerate}
\end{definition}

In our setting, asymptotic instances are typical with high probability as $m$ grows large.

\begin{restatable}{lemma}{lemtypical}
When $m = \Omega(n\log(n))$, $(v_i)_{i \in N}$ is typical with high probability.
\end{restatable}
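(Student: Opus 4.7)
The plan is to verify T1 and T2 separately by applying the Chernoff bounds (Lemma~\ref{lemma:chernoff}) to appropriate sums of $m$ i.i.d.\ variables and then taking a union bound over the $n$ agents (for T1) and the $\binom{n}{2}$ pairs (for T2). Since $m = \Omega(n \log n)$ provides a lot of slack, crude bounds will suffice; the main thing to check is that we pick the right scale factor $z$ in each Chernoff application and that the lower bound on the expected value holds.

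For T1, fix agent $i$ and set $S_i = \sum_j v_{ij}$. The summands are independent $[0,1]$-valued random variables (so $z=1$) with $E[S_i] = m\mu_i \geq m\mu_l$. Applying Lemma~\ref{lemma:chernoff} in both directions and summing gives
\[\Pr\bigl[|S_i - m\mu_i| > \epsilon\, m\mu_i\bigr] \leq 2\exp\!\bigl(-\epsilon^2 m\mu_i / 3\bigr) \leq 2\exp\!\bigl(-\epsilon^2 m \mu_l / 3\bigr).\]
A union bound over the $n$ agents gives a total failure probability of at most $2n\exp(-\epsilon^2 m \mu_l / 3)$, which is $o(1)$ whenever $m = \omega(\log n / (\epsilon^2 \mu_l))$, and in particular when $m = \Omega(n\log n)$.

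For T2, fix agents $i \neq k$ and define $Y_j = \bigl|\tfrac{v_{ij}}{\mu_i} - \tfrac{v_{kj}}{\mu_k}\bigr|$. Since $v_{ij}, v_{kj} \in [0,1]$ and $\mu_i, \mu_k \geq \mu_l$, each $Y_j$ lies in $[0, 1/\mu_l]$, so we apply Chernoff with $z = 1/\mu_l$. The $Y_j$'s are independent across items, and by our standing assumption on $\sD$, $E[Y_j] \geq \delta$, hence $E[\sum_j Y_j] \geq \delta m$. The lower-tail Chernoff bound then yields
\[\Pr\!\Bigl[\sum_j Y_j \leq (1-\epsilon)\delta m\Bigr] \leq \Pr\!\Bigl[\sum_j Y_j \leq (1-\epsilon) E[\textstyle\sum_j Y_j]\Bigr] \leq \exp\!\bigl(-\epsilon^2 \mu_l \delta m / 2\bigr),\]
where the first inequality uses $\delta m \leq E[\sum_j Y_j]$. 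A union bound over the at most $n^2$ ordered pairs gives a total failure probability of at most $n^2 \exp(-\epsilon^2 \mu_l \delta m / 2)$, which is again $o(1)$ when $m = \Omega(n\log n)$.

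Combining these two bounds via one final union bound, the probability that either T1 or T2 fails is $o(1)$, so the valuation profile is typical with high probability. The only real subtlety is the scale factor in the second Chernoff application --- one must remember that the summands can be as large as $1/\mu_l$ rather than $1$, which weakens the exponent by a factor of $\mu_l$; since $\mu_l$ is a positive constant this only affects hidden constants and is easily absorbed by the $m = \Omega(n\log n)$ bound.
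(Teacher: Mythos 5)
Your proof is correct and follows essentially the same approach as the paper: a Chernoff bound with $z=1$ for T1 and $z=1/\mu_l$ for T2 (using $E[\sum_j Y_j]\geq \delta m$ to pass to the relative-deviation form), followed by union bounds over agents and pairs. Your write-up is in fact slightly more careful than the paper's, whose displayed union-bound expressions contain minor typographical garbling of the same exponential bounds.
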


\begin{proof}
    Both parts of this proof involve a relatively direct application of the Chernoff bounds. For the first part, note that $v_{ij} \in [0,1]$, so

    \begin{align*}
    Pr[\sum_j v_{ij} \not \in [(1-\epsilon)m\mu_i, (1+\epsilon)m\mu_i] \\
    \leq 2 exp(\frac{-\epsilon^2}{3} m \mu_i) \leq 2 exp(\frac{-\epsilon^2}{3} m \mu_l).
    \end{align*}

    Noting that asymptotically $m\geq n$, taking the union bound over all agents gives us

    \begin{align*}
    Pr[T1 \textit{ doesn't hold }] \leq 2me^{-m} exp (-\frac{\epsilon^2}{3} \mu_l)
    \end{align*}

    which converges to $0$ exponentially in $m$. Similarly, $|\frac{v_{ij}}{\mu_i} - \frac{v_{kj}}{\mu_k}| \in [0,\frac{1}{\mu_l}]$ and it is independent across different items. By assumption, $E[|\frac{v_{ij}}{\mu_i} - \frac{v_{kj}}{\mu_k}| ] \geq \delta $. Thus, denoting $S := \sum_j |\frac{v_{ij}}{\mu_i} - \frac{v_{kj}}{\mu_k}| $, we have

    \begin{align*}
    Pr[S < (1-\epsilon)m\delta] \leq Pr[S < (1-\epsilon)E[S]] \\
    \leq exp(\frac{-\epsilon^2\mu_l}{2} E[S]) \leq exp(\frac{-\epsilon^2\mu_l}{2} m \delta ).
    \end{align*}

    Taking the union bound over all pairs of agents gives

    \begin{align*}
    Pr[T2\textit{ doesn't hold }] \leq 2m^2e^{-m} exp (-\frac{\epsilon^2}{2} \mu_l \delta )
    \end{align*}

    which again converges to 0 exponentially in $m$.
\end{proof}

It turns out that that correct choice of $b_{min}$ is $\Theta(\frac{1}{m})$, so we may select an appropriate constant $l$ and define $l/m = b_{min}$ (we will specify the value of $l$ later). With typical valuations, feasible scale factors exist and are bounded. Additionally, bids $b_{ij} = s_i \bar v_{ij}$ are bounded above by $\frac{2}{m\mu_l}$.

\begin{restatable}{lemma}{lemtypicalbids} \label{lem:typicalbids}
    Let $(v_i)_{i \in N}$ be typical, and $l/m = b_{min}$. Then
    \begin{enumerate}[leftmargin=8mm]
        \item $\forall i\in N,j\in M$, $b_{ij} < \frac{2}{m\mu_l}$, and
        \item $\forall i\in N$, $s_i \in [1-l, 1]$.
    \end{enumerate}
\end{restatable}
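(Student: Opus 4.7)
The plan is to first translate the typicality hypothesis T1 into an upper bound on the normalized valuations, then to analyze the continuous monotone function $s \mapsto \sum_j proj(s\bar v_{ij})$ that implicitly defines $s_i$, and finally to combine the two to bound the bids from above. The underlying intuition is that under typicality the normalized valuations already almost form a probability distribution, so the scale factor must stay close to $1$; the only deviation arises from items whose scaled normalized value falls below $b_{min}$ and is lifted up by the projection.

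I would begin with the elementary pointwise bound $\bar v_{ij} \leq \frac{1}{(1-\epsilon)m\mu_l}$, which follows from $v_{ij}\leq 1$, $\mu_i \geq \mu_l$, and T1's lower bound on $\sum_j v_{ij}$. Next, define $g_i(s) := \sum_j proj(s\bar v_{ij})$. Because each coordinate of $proj$ is continuous and nondecreasing in its argument, so is $g_i$. Assuming $b_{max}$ is chosen to be at least $\frac{1}{(1-\epsilon)m\mu_l}$ (so that no upper truncation occurs for any $s \leq 1$), we can write
\[g_i(s) \;=\; s \;+\; \sum_{j:\, s\bar v_{ij} < b_{min}}\!\bigl(b_{min}-s\bar v_{ij}\bigr),\]
where the correction term is nonnegative and at most $m\cdot b_{min} = l$. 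Hence $s \leq g_i(s) \leq s + l$ for all $s\in [0,1]$. In particular, $g_i(1) \geq 1$ and $g_i(1-l) \leq 1$, so by the intermediate value theorem there exists $s_i \in [1-l, 1]$ with $g_i(s_i) = 1$, establishing part~2.

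For part~1, part~2 yields $s_i \leq 1$, hence $s_i \bar v_{ij} \leq \frac{1}{(1-\epsilon)m\mu_l}$. Since each bid is the projection of this quantity into $[b_{min}, b_{max}]$, we have $b_{ij} \leq \max\{b_{min},\, s_i \bar v_{ij}\} = \max\bigl\{l/m,\, \tfrac{1}{(1-\epsilon)m\mu_l}\bigr\}$. Choosing the constant $l$ so that $l < 2/\mu_l$ and recalling $\epsilon < \delta/25$ (in particular $1-\epsilon > 1/2$) makes both terms strictly less than $\frac{2}{m\mu_l}$, as required.

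The main obstacle is really the bookkeeping around the projection operator: one must verify that $b_{max}$ has been chosen large enough, relative to the typicality-induced bound on $\bar v_{ij}$, that no upper truncation occurs for $s \leq 1$; otherwise $g_i$ need not dominate the identity and the argument that $s_i \leq 1$ collapses. Once that is checked, both conclusions follow from monotonicity and a single invocation of the intermediate value theorem.
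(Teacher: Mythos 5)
Your proposal is correct and follows essentially the same route as the paper: use T1 to bound $\bar v_{ij}$ strictly below $b_{max}=\tfrac{2}{m\mu_l}$ (so the upper truncation never binds for $s\le 1$), then analyze the monotone function $s\mapsto\sum_j proj(s\bar v_{ij})$ to place $s_i$ in $[1-l,1]$. Your sandwich $s\le g_i(s)\le s+l$ is a slightly cleaner packaging that simultaneously yields the existence of a feasible scale factor, $s_i\le 1$, and $s_i\ge 1-l$, which the paper proves as three separate small steps (a counting argument for existence, a contradiction argument for the upper bound, and the same summed inequality for the lower bound).
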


\begin{proof}
    We first argue that in a typical instance, $max(h_i) \geq 1$ for all $i$, so every agent has a scale factor that sets its bids that sum to 1. Note that

    \[{(1-\epsilon)m\mu_i} > \frac{3}{4} m \mu_l.\]

    Since the maximum value of an item is 1, we have more than $\frac{3}{4} m \mu_l$ items that have nonzero value. If we place the maximum bid on each item, then our total sum of bids is more than

    \[\frac{3}{4} m \mu_l \frac{2}{\mu_lm} = \frac{3}{2}.\]

    Thus, $max(h_i) > 3/2$, so every agent has a scale factor $s_i$ setting bids that sum to 1. Now, note that
    
    \[\bar v_{ij} = \frac{v_{ij}}{\sum_j v_{ij}} \leq \frac{v_{ij}}{(1-\epsilon)m\mu_i}< \frac{2}{m \mu_l}.\]

    This means that no $\bar v_{ij}$ attains the upper bound $\frac{2}{m \mu_l}$. Suppose for the sake of contradiction that $s_i > 1$. Then $s_i \bar v_{ij} > \bar v_{ij}$, and we have two cases. If $\bar v_{ij} < \frac{l}{m}$, then $b_{ij} \geq \frac{l}{m} > \bar v_{ij}$. On the other hand, if $\bar v_{ij} \geq \frac{l}{m}$, then $\bar v_{ij} \in [\frac{l}{m}, \frac{2}{\mu_lm})$, and $b_{ij} = min\{s_i\bar v_{ij}, \frac{2}{\mu_lm}\}$, both of which are strictly larger than $\bar v_{ij}$. Thus, in all cases, $\bar v_{ij} < b_{ij}$, but $\sum_j \bar v_{ij} = \sum_j b_{ij} = 1$, which is a contradiction, so $s_i \leq 1$. This implies $b_{ij} < \frac{2}{\mu_lm}$.

    Now, for the lower bound on $s_i$, we don't need typicality. Note that $b_{ij} \leq max(l/m , s_i \bar v_{ij}) \leq l/m +  s_i \bar v_{ij}$. Because $\sum_j b_{ij} = \sum_j \bar v_{ij}  =1 $, we may sum over all $j$ to say
\[1 = \sum_j b_{ij} \leq l + s_i \sum_j \bar v_{ij} = l + s_i \]
\[\implies \quad s_i \geq 1 - l.\]
\end{proof}

\section{The PRD Mechanism} \label{sec:main}
We are now ready to formally specify our mechanism. We begin with the choice of $l$, which we want to be sufficiently smaller than $\delta$. It suffices to choose $l = \frac{1}{25}\delta$.
Similarly, observing Lemma~\ref{lem:typicalbids}, we may set $b_{max} = \frac{2}{m\mu_l}$, which lets us decrease the dummy bid by a factor of $m$ without significantly affecting player bids.\footnote{This choice makes the reasonable assumption that $\delta, \mu_l$ are known. If they are unknown, it suffices to substitute $l, \mu_l$ with any functions that are $o(1)$ in $m$, i.e. sufficiently small as $m$ grows. C becomes $\omega(1)$ instead of constant, which changes all our big-$\Omega$ bounds into little-$\omega$ bounds instead.} In our allocation function, this determines the values of $c$ and $C$ as $c =$ $-\log (b_{min}) =$ $-\log(l/m)$ and $C =$ $f(b_{max}) =$ $\log(\frac{2}{\mu_lm}) -$ $\log(\frac{l}{m}) =$ $\log(\frac{2}{\mu_l\cdot l})$.

The PRD Mechanism is formally described in Algorithm~\ref{alg:prd}. In the first phase, the mechanism constructs optimal bids using \textsc{Bid}-\textsc{Construction} (Algorithm~\ref{alg:bid_construction}), and then calls the \textsc{Fractional}-\textsc{Allocation} subroutine (Algorithm~\ref{alg:fractional}). In the second phase, the mechanism uses \textsc{Randomized}-\textsc{Rounding} (Algorithm~\ref{alg:rounding}) to round the fractional allocation in a randomized manner, ensuring that for typical instances the resulting allocation in envy-free.

\begin{algorithm}[H]
\caption{\textsc{The PRD Mechanism}}\label{alg:prd}
\begin{algorithmic}[1]
\State\textbf{Input} agents $[n]$, items $[m]$, reported valuations $(v_i)_{i\in[n]}$, threshold $l$.
\State\textbf{Output} allocation $(A_1, \ldots, A_n)$ that is envy-free \textit{whp}.
\Statex
\rule{.15\textwidth}{.5pt}
{First Phase}
\rule{.15\textwidth}{.5pt}
\For{$i \in [n]$}
    \State $b_i \gets \textsc{Bid-Construction}(v_i,l)$
\EndFor

\State $x \gets \textsc{Fractional-Allocation}((b_i)_{i\in[n]},l)$
\Statex 
\rule{.139\textwidth}{.5pt}
{Second Phase}
\rule{.139\textwidth}{.5pt}
\State $A \gets \textsc{Randomized-Rounding}(x)$
\Statex
\rule{.438\textwidth}{.5pt}
\State \textbf{return} $(A_1, \ldots, A_n)$
\end{algorithmic}
\end{algorithm}

\begin{algorithm}[H]
\caption{\textsc{Bid-Construction}}\label{alg:bid_construction}
\begin{algorithmic}[1]
\vspace{0.2cm}
\State\textbf{Input} reported valuations $v_i$, threshold $l$.
\State\textbf{Output} bids $b_{ij}$ for agent $i$.
\Statex
\rule{.438\textwidth}{.5pt}

\State Set $h_i'(s) \equiv 0$ 
\State Set $\bar v_i = v_i /\sum_jv_{ij}$ 
\For{$j \gets 1 $ to $m$}
    \If{$\bar v_{ij} \neq 0$}
        \State Increase $h_i'(s)$ over $s \in [\frac{l}{m}\frac{1}{\bar v_{ij}}, \frac{2}{\mu_lm}\frac{1}{\bar  v_{ij}}]$ by $\bar v_{ij}$
    \EndIf
\EndFor

\State Construct $h_i(s)$ using $h_i(0) = l$ and $h_i'$
\If{$max(h_i) \geq  1$}
    \State Find $s_i$ s.t. $h_i(s_i) = 1$ 
    \State Return $b_i = max(min( s_i \bar v_i, \frac{2}{\mu_lm}), \frac{l}{m})$
\Else
    \For{$j \gets 1 $ to $m$}
        \State if $v_{ij} > 0$, set $b_{ij} = \frac{2}{\mu_lm}$
    \EndFor
    \State Set remaining $b_{ij}$s arbitrarily such that $\sum_j b_{ij}=1$
\EndIf

\end{algorithmic}
\end{algorithm}

\begin{algorithm}[H]
\caption{\textsc{Fractional-Allocation}}\label{alg:fractional}
\begin{algorithmic}[1]
\vspace{0.2cm}
\State\textbf{Input} profile $(b_i)_{i\in[n]}$ of agent bids
\State\textbf{Output} fractional allocation $(x_1, \ldots, x_n)$
\Statex
\rule{.438\textwidth}{.5pt}
\State $C \gets -\log(l) + \log(\frac{2}{\mu_l})$
\State $c \gets -\log(l/m)$

\For{every $j\in [m]$}
    \For{$i \gets 1 $ to $n$}
        \State $x_{ij} \gets \frac{1}{nC}\bigg(\log(b_{ij}) + c\bigg)$
    \EndFor
    \State $dummyAlloc \gets \frac{1}{C} \log(\frac{2}{\mu_lm}) - \sum_i A_{ij}$
    \State $x_{\cdot j} \gets x_{\cdot j} + \Vec{\mathbf{1}} \cdot dummyAlloc/n$
\EndFor

\State return $x$
\end{algorithmic}
\end{algorithm}

\begin{algorithm}[H]
\caption{\textsc{Randomized-Rounding}}\label{alg:rounding}
\begin{algorithmic}[1]
\vspace{0.2cm}
\State\textbf{Input} Profile $(x_i)_{i\in[n]}$ of fractional allocations
\State\textbf{Output} Integral allocations $(A_1, \ldots, A_n)$ that are envy-free with high probability
\Statex
\rule{.438\textwidth}{.5pt}
\State Initialize $A = 0_{n\times m}$

\For{every $j\in [m]$}
    
    \State $x_j \gets [x_{1j} \dots x_{nj}]$
    \State $i^* \gets sample([n], x_j)$
    \State $A_{i^*j} = 1$
\EndFor

\State return $A_1, \dots A_n$
\end{algorithmic}
\end{algorithm}

\begin{remark}
    The PRD Mechanism can be implemented in polynomial time. In Algorithm~\ref{alg:bid_construction}, the function $h$ can be constructed using an appropriate data structure in polynomial time, and it is piecewise linear with $O(m)$ segments, so we may find $x$ with $h(x) = 1$ in polynomial time. Algorithms~\ref{alg:fractional} and~\ref{alg:rounding} contain simple operations whose running times are easily verified to be polynomial.
\end{remark}

We first show that our mechanism is truthful in expectation by showing that the bids found by Algorithm~\ref{alg:bid_construction} are agent-optimal. 

\begin{restatable}{theorem}{thmtruthful}\label{thm:truthful}
    The PRD Mechanism is truthful in expectation.
\end{restatable}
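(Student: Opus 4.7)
The plan is to exploit the two architectural features of the PRD Mechanism --- the dummy agent and the logarithmic transformation $f(b) = \log(b)+c$ --- to reduce truthfulness in expectation to a single-agent convex program whose optimizer is produced by \textsc{Bid-Construction} precisely when the agent reports truthfully.

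\textbf{Reduction to a single-agent objective.} First I would observe that \textsc{Randomized-Rounding} allocates item $j$ to agent $i$ with probability exactly $x_{ij}$, so $\mathbb{E}[v_i^T A_i] = v_i^T x_i$ and it suffices to analyze the fractional allocation. The dummy's bid on item $j$ is $nC - \sum_k(\log(b_{kj})+c)$, which forces the total $f$-weighted bid on each item to equal $nC$; uniformly redistributing the dummy's share then yields
\[x_{ij} = \frac{\log(b_{ij})+c}{nC} + \frac{nC - \sum_k(\log(b_{kj})+c)}{n^2C},\]
whose second summand depends only on agents $k \neq i$. Consequently, agent $i$'s expected payoff is, up to an additive constant in its own report, $\tfrac{1}{nC}\sum_j v_{ij}\log(b_{ij})$, so (after dividing by $V_i = \sum_j v_{ij}$) agent $i$'s problem is to maximize $\sum_j \bar v_{ij}\log(b_{ij})$ over the set of bid vectors the mechanism can return from any report.

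\textbf{KKT characterization and algorithmic matching.} Next I would analyze the program $\max \sum_j \bar v_{ij}\log(b_{ij})$ over $B = \{b \in [l/m, 2/(m\mu_l)]^m : \sum_j b_j = 1\}$. Strict concavity of $\log$ combined with KKT gives the unique optimizer
\[b_{ij}^* = \min\bigl\{\max\bigl\{s^* \bar v_{ij},\,l/m\bigr\},\,2/(m\mu_l)\bigr\},\]
where $s^* = 1/\lambda^*$ is the reciprocal of the budget multiplier and is pinned down by $\sum_j b_{ij}^* = 1$. I would then verify that \textsc{Bid-Construction} executes exactly this scale-and-project operation: the piecewise-linear monotone function $h_i(s) = \sum_j \mathrm{proj}(s \bar v_{ij}, [l/m, 2/(m\mu_l)])$ is assembled from the per-item slopes, and when $\max h_i \geq 1$ the algorithm locates $s_i = s^*$ and outputs $b_i = \mathrm{proj}(s_i \bar v_i) = b_i^*$. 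In the degenerate case $\max h_i < 1$ --- where the budget cannot be met even by saturating every positive-support coordinate --- the objective on $B$ is maximized by saturating every such coordinate at $2/(m\mu_l)$ and distributing the slack arbitrarily across zero-support coordinates, which is precisely what the fallback branch does. Either way a truthful report produces a maximizer of $\sum_j \bar v_{ij}\log(b_j)$ on $B$, while any misreport $\hat v_i$ still returns some $b(\hat v_i) \in B$, so $v_i^T x_i(v_i) \geq v_i^T x_i(\hat v_i)$ and truthfulness in expectation follows.

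\textbf{Main obstacle.} The delicate step will be verifying that \textsc{Bid-Construction} lands on a KKT optimum across both branches, in particular (i) that the fallback bids attain the global maximum of $\sum_j \bar v_{ij}\log(b_{ij})$ on $B$ --- which should reduce to the uniform upper bound $\log(2/(m\mu_l))$ on each log-coordinate --- and (ii) that no non-truthful input can drive the mechanism into a bid profile strictly improving the true-valuation objective, even by switching which branch is triggered. Both obligations should follow from the uniqueness of the KKT point together with the fact that every bid the mechanism could ever return lies in $B$.
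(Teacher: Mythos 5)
Your proposal is correct and follows essentially the same route as the paper's proof: reduce to the fractional allocation via the expectation-preserving rounding, use the dummy agent to separate agent $i$'s objective into $\frac{n-1}{n^2C}\sum_j \bar v_{ij}\log(b_{ij})$ plus terms independent of $b_i$, characterize the optimum of the resulting convex program via KKT as the scale-and-project bids, and match both branches of \textsc{Bid-Construction} (including the degenerate $\max h_i < 1$ case) to that optimum. The paper carries out exactly the KKT multiplier verification and piecewise-linear analysis of $h_i(s)$ that you outline as the remaining obligations.
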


\begin{proof}
    First, it is easy to verify that the \textsc{Randomized}-\textsc{Rounding} step preserves the expected value of any fractional allocation, so it suffices to show that our fractional allocation is optimal among all obtainable allocations for each agent.

    Let $b_{ij}$ be arbitrary feasible bids for agent $i$. The fractional allocation $x_{ij}$ as a function of the bids is

    \[x_{ij} = \frac{\log( b_{ij}) + c}{nC} + \frac{nC - \sum_k \big (\log( b_{kj}) + c \big )}{n^2C}\]

    For the purpose of optimizing $i$'s bid, we may separate all additive terms that do not depend on $b_{ij}$ into some term $F_j(b_{-i})$, leaving us with

    \[x_{ij} = \frac{n-1}{n^2C} \log( b_{ij}) + F_j(b_{-i})\]

    Agent $i$'s normalized value can then be expressed as

    \[ \sum_j \bar v_{ij}x_{ij} = \sum_j \bar  v_{ij}\frac{n-1}{n^2C} \log( b_{ij}) +  \sum_j \bar  v_{ij} F_j(b_{-i})\]

    Agent $i$ can only influence the first term, so the optimal bid for agent $i$ is given by the solution to the following constrained optimization problem
    
    \begin{align*}
    min \ \ \ \ & \sum_j -\bar v_{ij} \log(b_{ij}) \\
    s.t. \ \ \ \ & \sum_j b_{ij}  = 1 \\ 
     & b_{ij} \geq l/m\ \  \forall j \\
     & b_{ij} \leq 2/m\mu_l\ \  \forall j 
    \end{align*}

    We first note that the objective is a convex function, and the feasible set is a convex set with linear constraints. We also note that $l/m < \frac{1}{m} < \frac{2}{m\mu_l}$, so the constraint set has non-empty interior. Thus the KKT conditions are both necessary and sufficient for optimality. We get the Lagrangian

    \[
    \sL =  \sum_j - \bar v_j \log(b_{ij}) + \lambda(1-\sum_j b_{ij}) 
    + \sum_j \mu_j (\frac{l}{m}- b_{ij}) + \sum_j \nu_j (b_{ij} - \frac{2}{m\mu_l})\]\[
    \sL_j = -\frac{\bar  v_{ij}}{b_{ij}} - \lambda - \mu_j + \nu_j= 0
    \]
    Then, we define $b_{ij}(s) = min \{max\{s \bar  v_{ij}, \frac{l}{m} \} , \frac{2}{m\mu_l}\}$, representing the projection of normalized values scaled by $s$ onto our bid interval. We show that by setting $\lambda = -(s^{-1})$, any such bid will satisfy all the KKT conditions except possibly the primal constraint $\sum_j b_{ij}(s) = 1$.
    
    First, it is clear that the upper and lower bound constraints are satisfied by our construction of $b_{ij}(s)$. For some item $j$, if neither the upper nor lower constraint is binding, then $\frac{\bar v_{ij}}{b_{ij}(s)} = s^{-1}$. Setting $\lambda = -s^{-1}$ and $\mu_j = \nu_j = 0$ satisfies $\sL_j = 0$, complementary slackness, and dual feasibility. If $b_{ij}(s) = l/m$, then $b_{ij}(s) \geq s \bar  v_{ij}$, so $-\frac{\bar  v_{ij}}{b_{ij}(s) } \geq -s^{-1} = \lambda$. Then, we observe that setting $\nu_j = 0$ and $\mu_j = -\frac{\bar v_{ij}}{b_{ij}(s) } - \lambda \geq 0$ satisfies complementary slackness, $\sL_j = 0$, and dual feasibility as desired. Similarly, if $b_{ij}(s) = \frac{2}{m\mu_l}$, then  $b_{ij}(s) \leq s \bar v_{ij}$, so $-\frac{\bar v_{ij}}{b_{ij}(s)} \leq -s^{-1} = \lambda$. Thus, setting $\mu_j = 0$ and $\nu_j = \frac{\bar v_{ij}}{b_{ij}(s)} + \lambda \geq 0$
    again satisfies complementary slackness, $\sL_j = 0$, and dual feasibility, which means that all KKT conditions except $\sum_j b_{ij}(s) = 1$ are satisfied for \textit{any} $s$.

    We may view $\sum_j b_{ij}(s)$ as a function of $s$, denoted $h_i(s)$. We note that $h_i(s)$ is a piecewise linear function, where each segment has slope equal to the total normalized value of items whose bids are in the interior of the feasible set. Specifically, each nonzero $\bar v_{ij}$ increases the slope of $h_i(s)$ over $[\frac{l}{m} \frac{1}{\bar v_{ij}}, \frac{2}{\mu_lm} \frac{1}{\bar v_{ij}}]$ by $\bar v_{ij}$. If we set $s = 0$, then $\forall j, s\bar v_{ij} < l/m$, so $h_i(s) = l < 1$. From here, we have two cases. On one hand, if $max(h_i) \geq 1$, then because $h_i(s)$ is a continuous non-decreasing function of $s$, there exists $s_i$ s.t. $h_i(s_i) =  1$, which is a KKT point, and thus optimal. We denote by $s_i$ the scale factor for agent $i$. On the other hand, if $max(h_i) < 1$, then this means even for arbitrarily large $s$, we cannot meet the budget constraint. This only occurs if there are so few non-0 valuations that agent $i$ may bid the maximum on all such items and still have some budget left over. In this case, it is easy to see that assigning the maximum bid to all positively valued goods, and assigning the remaining budget arbitrarily, is optimal. 

    Our \textsc{Bid}-\textsc{Construction} algorithm uses this to compute optimal bids. The function $h_i(s)$ is piecewise-linear with at most $2m+1$ pieces, so we can construct it in polynomial time. The maximum value of $h_i(s)$ can be computed by setting $\bar v_{min} = min\{\bar v_{ij}: \bar v_{ij} > 0\}$ and $s = \frac{2}{\mu_lm} \frac{1}{\bar v_{min}}$. Then, if $max(h_i) \geq 1$, we find $s_i$ s.t. $h_i(s_i) = 1$ and set bids equal to $b_{ij}(s_i)$. Otherwise, we assign a maximum bid to all positively-valued goods and assign the remaining budget arbitrarily, maintaining truthfulness in the remaining case.
\end{proof}

Next, we show that typical instances yield fractional allocations with a high fractional envy margin. We note that conditioning on valuations being typical results in all values being correlated. We therefore emphasize that after assuming typicality, the theorem and proof are both deterministic.

\begin{restatable}{lemma}{lemfem}\label{lem:fem}
    Let $(v_i)_{i \in N}$ be typical. Then, $\forall i \neq k$,
    \[fEM_{ik} \geq \frac{\delta^2}{4nC}.\]
\end{restatable}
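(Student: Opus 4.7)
The plan is to lower-bound $fEM_{ik}$ in three steps: simplify the envy expression, reduce it to a KL divergence between bids via the KKT structure of bid construction, and then apply Pinsker's inequality together with typicality.

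First, I would note that the additive term in $x_{ij}$ coming from the dummy agent's share is identical across agents and therefore cancels in the envy, giving
\[
nC\cdot fEM_{ik} \;=\; \sum_{j\in[m]} \bar v_{ij} \log\!\bigl(b_{ij}/b_{kj}\bigr).
\]

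Second, I would relate this expression to $D_{KL}(b_i\|b_k)$ by exploiting the KKT structure of the bids established in the proof of Theorem~\ref{thm:truthful}. For every item $j$ either $b_{ij}=s_i \bar v_{ij}$ (unclipped), in which case $\bar v_{ij}=b_{ij}/s_i$, or $b_{ij}=l/m$ (clipped), in which case $\bar v_{ij}\leq l/(s_i m)=b_{ij}/s_i$. Splitting off $b_{ij}/s_i$ from $\bar v_{ij}$ in the sum gives
\[
nC\cdot fEM_{ik} \;=\; \frac{1}{s_i}\,D_{KL}(b_i\|b_k) \;+\; \sum_j \Bigl(\bar v_{ij} - \tfrac{b_{ij}}{s_i}\Bigr)\log\!\bigl(b_{ij}/b_{kj}\bigr).
\]
The error term vanishes on unclipped items; on clipped items both factors are nonpositive --- $\bar v_{ij}-b_{ij}/s_i\leq 0$ by the KKT condition, and $\log(b_{ij}/b_{kj})=\log((l/m)/b_{kj})\leq 0$ since $b_{kj}\geq l/m$ --- so each summand is nonnegative. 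Combined with $s_i\leq 1$ from Lemma~\ref{lem:typicalbids}, this yields the clean bound $nC\cdot fEM_{ik}\geq D_{KL}(b_i\|b_k)$.

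Third, I would apply Pinsker's inequality (Lemma~\ref{lemma:pinsker}), which gives $D_{KL}(b_i\|b_k)\geq \tfrac{1}{2}\|b_i-b_k\|_1^{\,2}$, and estimate $\|b_i-b_k\|_1$ through the normalized valuations using the triangle inequality $\|b_i-b_k\|_1 \geq \|\bar v_i-\bar v_k\|_1 - \|\bar v_i-b_i\|_1 - \|\bar v_k-b_k\|_1$. Typicality T2 together with T1's control on $S_i,S_k\in[(1\mp\epsilon)m\mu_i]$ gives $\|\bar v_i-\bar v_k\|_1\geq \delta-O(\epsilon)$ after a short computation rewriting $\bar v_{ij}=(v_{ij}/\mu_i)/(S_i/\mu_i)$ and applying the triangle inequality termwise. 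For the bid-to-valuation error, using $\sum_j(\bar v_{ij}-b_{ij})=0$ so $\|\bar v_i-b_i\|_1$ equals twice the positive-part sum, and bounding the latter by $(1-s_i)\leq l$ (from unclipped contributions) plus an $O(l^2)$ clipping-range correction, yields $\|\bar v_i-b_i\|_1\leq 2l+O(l^2)$. For the chosen $\epsilon,l<\delta/25$ this gives $\|b_i-b_k\|_1$ close to $\delta$, and Pinsker then delivers $D_{KL}(b_i\|b_k)\geq \delta^2/4$ as required.

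The main obstacle I expect is the constant bookkeeping in Step~3: two triangle inequalities followed by Pinsker's squaring leaves only a thin margin for the $\delta^2/4$ target, so the precise relationships among $\epsilon$, $l$, and $\delta$ must be tracked with care. The Step~2 decomposition is the other crucial move --- without the sign-matching observation that forces the clipped correction to be nonnegative, one would only obtain $nC\cdot fEM_{ik}\approx D_{KL}(b_i\|b_k)$ up to an additive $O(lC)$ error, which would be too loose for small $\delta$ since $C=\log(2/(l\mu_l))$ grows like $\log(1/\delta)$.
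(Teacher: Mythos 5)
Your proposal is correct and follows essentially the same route as the paper's proof: the dummy term cancels, the clipped-item sign-matching argument (on items with $b_{ij}=l/m$ both $\bar v_{ij}-b_{ij}/s_i$ and $\log(b_{ij}/b_{kj})$ are nonpositive) combined with $s_i\le 1$ gives $nC\cdot fEM_{ik}\ge D_{KL}(b_i\|b_k)$, and Pinsker plus the two triangle inequalities with T1/T2 give $D_{KL}(b_i\|b_k)\ge \delta^2/4$. The only cosmetic differences are the order of the steps and your bounding of $\|\bar v_i-b_i\|_1$ via the zero-sum/positive-part identity rather than the paper's termwise bound $|b_{ij}-\bar v_{ij}|\le l/m+l\bar v_{ij}$; both yield the same $O(l)$ control.
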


\begin{proof}

Recall that the fractional envy margin is $fEM_{ik} = \frac{\sum_j  \bar v_{ij} \log(b_{ij}) - \bar v_{ij} \log(b_{kj})}{nC}$. To compute a lower bound on this value, we will need to compute a bound on the KL divergence between bids $D_{KL}(b_i\|b_k)$.

We first want to bound the total variation distance between $b_i$ and $b_k$. As our guarantee is on the difference between normalized valuations, we first define estimated normalized values $v_{ij, est} := \frac{v_{ij}}{m \mu_i}$. By the triangle inequality, we have

\begin{align*}
\sum_j |\bar v_{ij} - \bar v_{kj}| \geq &\sum_j |\bar v_{ij,est} - \bar v_{kj,est}| \\
&- \sum_j |\bar v_{ij, est} - \bar v_{ij} | \\
&- \sum_j |\bar v_{kj, est} - \bar v_{kj} |.
\end{align*}

By the second typicality condition, the first term is at least $(1-\epsilon)\delta$. Moreover, 

\begin{align*}
\bar v_{ij} &= \frac{v_{ij}}{\|v_i\|_1} \in [\frac{v_{ij}}{m\mu_i(1+\epsilon)}, \frac{v_{ij}}{m\mu_i(1-\epsilon)}] \\
&= [\frac{\bar v_{ij, est}}{(1+\epsilon)}, \frac{\bar v_{ij, est}}{(1-\epsilon)}]
\end{align*}

meaning that $ |\bar v_{ij, est} - \bar v_{ij} | \leq \frac{\epsilon}{1-\epsilon} \bar v_{ij, est} \leq \frac{\epsilon}{(1-\epsilon)^2} \bar v_{ij}$. A similar argument applies for the third term. As $\sum_j \bar v_{ij} = 1$, the second and third terms are each bounded above by $\frac{\epsilon}{(1-\epsilon)^2}$, giving us

\[  \sum_j |\bar v_{ij} - \bar v_{kj}| \geq \delta - \epsilon\delta\ -2 \frac{\epsilon}{(1-\epsilon)^2}.\]

Now, again by the triangle inequality, we have

\begin{align*}
\sum_j |b_{ij} - b_{kj}| &\geq \sum_j |\bar v_{ij} - \bar v_{kj}|\\
&- \sum_j |b_{ij} - \bar v_{ij}| \\
&- \sum_j |b_{kj} - \bar v_{kj}|.
\end{align*}

We split the analysis into two cases. If $b_{ij}$ is equal to its minimum value, i.e. $b_{ij} = l/m$, then $s_i\bar v_{ij} \leq l/m$. Because $s_i \geq 1-l$, $\bar v_{ij} \in [0,\frac{l}{(1-l)m}] \subset [0, \frac{2l}{m}]$. Thus, $b_{ij}$ differs from $\bar v_{ij}$ by no more than $l/m$. On the other hand, if $b_{ij} > l/m$, then $s_i\bar v_{ij} = b_{ij} \in [(1-l)\bar v_{ij} , \bar v_{ij}]$, so $b_{ij}$ differs from $\bar v_{ij}$ by at most $ l \bar v_{ij}$. In either case, for all $j$, $|b_{ij} - \bar v_{ij}| \leq l/m + l\bar v_{ij}$. The same result holds for $k$ instead of $i$. As $\sum_j l \bar v_{ij} + l/m = 2l$, we have

\[ \sum_j |b_{ij} - b_{kj}| \geq \delta   - \epsilon\delta\ -2 \frac{\epsilon}{(1-\epsilon)^2}- 4l.\]

For our choice of $l$ and $\epsilon$, i.e. when both $l$ and $\epsilon$ are less than $\delta/25$, for any $\delta\in(0,1)$ the above inequality gives us

\[ \sum_j |b_{ij} - b_{kj}| \geq \frac{1}{\sqrt{2}}\delta.\]

By Pinsker's inequality we have

\begin{align*}
D_{KL}(b_i\|b_k) &\geq 2 \delta_{TV}^2 (b_i, b_k) \\
&= \frac{1}{2} \bigg ( \sum_j |b_{ij} - b_{kj}| \bigg )^2 \\
&\geq \frac{1}{4}\delta^2.
\end{align*}

Thus we also have

\[ \frac{1}{nC}D_{KL}(b_i\|b_k) \geq \frac{1}{4nC} \delta^2.\]

How do we relate the above quantity to the fractional envy margin? We again split the analysis into two cases. If $b_{ij}$ is not equal to its minimum value, then $s_i \bar v_{ij} = b_{ij}$ and

\[\bar v_{ij} \log(b_{ij}) - \bar v_{ij} \log(b_{kj}) =  \frac{b_{ij}}{s_i}\log(b_{ij}) - \frac{b_{ij}}{s_i} \log(b_{kj}).\]

On the other hand, if $b_{ij}$ is minimal, then $\log(b_{ij})$ is minimal as well. This means $\log(b_{ij}) - \log(b_{kj}) \leq 0$, and $s_i \bar v_{ij} \leq l/m$, so

\[s_i\bar v_{ij} \leq b_{ij} \implies \bar v_{ij} \leq \frac{b_{ij}}{s_i}\]
\[\implies \bar v_{ij} (\log(b_{ij}) - \log(b_{kj})) \geq \frac{b_{ij}}{s_i}  (\log(b_{ij}) - \log(b_{kj})).\]

This inequality holds in both cases. Thus, noting that $1/s_i > 1$ and summing over all $j$, we have

\begin{align*}
fEM_{ik} &= \frac{\sum_j  \bar v_{ij} \log(b_{ij}) - \bar v_{ij} \log(b_{kj})}{nC}\\
&\geq \frac{1}{s_i}\frac{\sum_j b_{ij} \log(b_{ij}) - b_{ij} \log(b_{kj})}{nC}\\
&\geq \frac{\sum_j b_{ij} \log(b_{ij}) - b_{ij} \log(b_{kj})}{nC}\\
&= \frac{1}{nC}D_{KL}(b_i\|b_k)\\
&\geq \frac{1}{4nC} \delta^2.
\end{align*}
\end{proof}

Finally, we prove that typicality and the resulting $fEM$ bound give us envy-freeness after rounding.

\begin{restatable}{theorem}{thmef}\label{thm:ef}
    Let $(v_i)_{i \in N}$ be typical. When $m = \Omega(n\log(n))$, allocating each item $j$ independently with probabilities $x_{ij}$ yields an envy-free allocation with high probability.
\end{restatable}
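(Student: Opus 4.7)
The plan is to combine Lemma~\ref{lem:fem} with a concentration-of-measure argument, applied pair-by-pair, and then a union bound. Fix a pair $i \neq k$, and write the integral envy margin as $EM_{ik} = \sum_j Y_j$ with $Y_j := \bar v_{ij}(A_{ij} - A_{kj})$. Because Algorithm~\ref{alg:rounding} samples the item assignments independently across items $j$, the $Y_j$ are mutually independent; each satisfies $|Y_j| \leq \bar v_{ij} \leq 2/(m\mu_l)$ by Lemma~\ref{lem:typicalbids}, and $E[EM_{ik}] = fEM_{ik} \geq \delta^2/(4nC)$ by Lemma~\ref{lem:fem}.

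The heart of the argument is a sharp variance bound. Since each item is assigned to exactly one agent, $A_{ij}A_{kj} = 0$, so $E[(A_{ij} - A_{kj})^2] = x_{ij} + x_{kj}$, giving $\mathrm{Var}(Y_j) \leq \bar v_{ij}^{\,2}(x_{ij} + x_{kj})$. The key observation is that $x_{ij} \leq 2/n$ for every $i,j$: inspecting Algorithm~\ref{alg:fractional}, the non-dummy term $(\log b_{ij} + c)/(nC)$ lies in $[0,1/n]$ because $\log b_{ij} + c \in [0,C]$, and the uniformly redistributed dummy-share term also lies in $[0,1/n]$. Combined with $\sum_j \bar v_{ij}^{\,2} \leq (\max_j \bar v_{ij}) \sum_j \bar v_{ij} \leq 2/(m\mu_l)$, this yields $\sum_j \mathrm{Var}(Y_j) \leq 8/(mn\mu_l)$.

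Next, apply Bernstein's inequality to the centered sum $\sum_j(Y_j - E[Y_j])$ with deviation $t = fEM_{ik}$, variance proxy $\sigma^2 \leq 8/(mn\mu_l)$, and range $M \leq 4/(m\mu_l)$. Because $fEM_{ik} = O(1/n)$, the variance term dominates the $Mt/3$ term in the Bernstein denominator, and the bound reduces to $P[EM_{ik} \leq 0] \leq \exp(-\Omega(fEM_{ik}^{\,2} \cdot mn)) = \exp(-\Omega(m/n))$ once the constants $\delta, \mu_l, C$ are absorbed. A union bound over the fewer than $n^2$ ordered pairs $(i,k)$ then yields overall failure probability at most $n^2 \exp(-\Omega(m/n))$, which is $o(1)$ whenever $m = \Omega(n\log n)$ with a sufficiently large hidden constant.

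The main obstacle is securing the extra factor of $1/n$ in the variance. A naive Hoeffding bound using only $|Y_j| \leq 2/(m\mu_l)$ would yield $\exp(-\Omega(m/n^2))$ rather than $\exp(-\Omega(m/n))$ and force $m = \Omega(n^2\log n)$. Recovering the advertised $\Omega(n\log n)$ threshold requires exploiting the structure of the dummy-agent mechanism: it keeps each $x_{ij}$ at order $1/n$, so $\mathrm{Var}(A_{ij}-A_{kj})$ inherits that extra $1/n$ factor relative to its worst-case squared range. The other minor point to verify is that the linear (range) term in Bernstein is indeed dominated by the variance term, which follows from $fEM_{ik} = O(1/n)$ against $\sigma^2/M = \Theta(1/n)$.
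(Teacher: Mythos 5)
Your proof is correct, and it takes a noticeably different (and somewhat cleaner) route than the paper's. The paper does not work with the difference $\sum_j \bar v_{ij}(A_{ij}-A_{kj})$ directly; instead it splits the envy margin into the two sums $\sum_j \bar v_{ij}X_{ij}$ and $\sum_j \bar v_{ij}X_{kj}$ and applies the multiplicative Chernoff bound (Lemma~\ref{lemma:chernoff} with $z = 2/(\mu_l m)$) to each side separately. Because the multiplicative form needs a \emph{lower} bound on the mean of the second sum, and $\bar v_i x_k$ has no a priori lower bound, the paper must introduce an auxiliary stochastically dominating allocation $\tilde x_k$ with $\bar v_i\tilde x_k \geq 1/(2n)$, together with the technical substitution $\delta' \leq \min\{\delta, 2\sqrt{C}\}$ to keep the relative deviation $\epsilon'$ in the valid range. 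Your single Bernstein application on the centered difference sidesteps both devices: the variance bound $\sum_j \mathrm{Var}(Y_j) \leq 8/(mn\mu_l)$ needs no lower bound on $\bar v_i x_k$, and Bernstein has no restriction analogous to $\epsilon' \leq 1$. Both arguments ultimately extract the crucial extra factor of $1/n$ from the same structural fact about the dummy-agent mechanism --- every fractional share satisfies $x_{ij} \leq 2/n$ --- the paper via the mean $E[S] \geq 1/n$ in the multiplicative Chernoff exponent, and you via the variance; both then land on a per-pair failure probability of $\exp(-\Omega(m/n))$ and the same union bound. Two cosmetic remarks: the bound $\bar v_{ij} \leq 2/(m\mu_l)$ follows from typicality condition T1 (it is established inside the proof of Lemma~\ref{lem:typicalbids} rather than in its statement, which is phrased in terms of $b_{ij}$ and $s_i$); and, as in the paper, you should note that when $n$ is bounded the failure probability $n^2\exp(-\Omega(m/n))$ still vanishes because it decays exponentially in $m$.
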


\begin{proof}
For a given typical valuation profile and associated fractional allocation $x_{ij}$, suppose the items are rounded independently with probabilities given by $x_{ij}$. Let $X_{ij}((v_i)_{i \in N})$ be the $\{0,1\}$-valued indicator variable for whether agent $i$ is assigned item $j$. We note that the $X_{ij}$'s are independent across items, as they are defined with respect to a given valuation profile.

For our analysis of the rounding step, we select some $\delta' \leq \min\{\delta, 2\sqrt{C}\}$. Note that the previous bound on $fEM_{ik}$ continues to hold if we replace $\delta$ with $\delta'$. Thus, by Lemma~\ref{lem:fem},

\[\forall i \neq k, \bar v_ix_i \geq \bar v_ix_k +  \frac{\delta'^2}{4nC}. \]

Note that $\sum_{k\in [n]} \bar v_ix_k   = \bar v_i \cdot \Vec{1} = 1 $. Summing the previous inequality over all agents and substituting gives us

\[ n \bar v_ix_i \geq 1 + \frac{n-1}{4nC}\delta'^2 \implies \bar v_i x_i \geq \frac{1}{n} + \frac{n-1}{4n^2C}\delta'^2 \]

For $n \geq 2$, this gives the inequality

\[ \bar v_i x_i \geq \frac{1}{n} + \frac{\delta'^2 }{8nC}\]

which lower bounds the expected value of $\sum_j v_{ij}X_{ij}$. Additionally, note that no agent can receive more than $2/n$ on any item and so $\bar v_i x_i \leq 2/n$. This is because, in the extreme case, if agent~$i$ `bids' the maximum value and every other agent `bids' the minimum, agent~$i$ receives 
\[\frac{C}{nC} + \frac{nC-C}{n^2C} \leq \frac{2}{n}.\]

We want to obtain similar bounds on $\bar v_i x_k$. To do this, we will instead consider an auxiliary allocation that is strictly better than agent $k$'s actual allocation and that suffices to bound agent $i$'s envy. We define $\Tilde{x}_k$ such that $\forall j, x_{kj} \leq \Tilde{x}_{kj} \leq \frac{2}{n} $ and $\bar v_ix_i - \bar v_i\Tilde{x}_k$ attains the fractional envy margin bound exactly, i.e.,

\[\bar v_ix_i = \bar v_i \Tilde{x}_k +  \frac{\delta'^2}{4nC}.\]

Thus we have $\bar v_i \Tilde{x}_k \geq \frac{1}{n} - \frac{\delta'^2}{8nC}$. Note that by our choice of $\delta'$, we also have $\delta'^2/8C \leq \frac{1}{2}$. Thus $\bar v_i \Tilde{x}_k \geq \frac{1}{2n}$. Additionally, as before we have $ \bar v_i \Tilde{x}_k  \leq \frac{2}{n}$. Then, using $\Tilde{x}_k$, we define the $\{0,1\}$-valued indicator variables $\Tilde{X}_{kj}$ that equal $1$ with probability $\Tilde{x}_{kj}$, are independent across items, and stochastically dominate $X_{kj}$. This gives us the bound

\[ EM_{ik} \geq \sum_j \bar v_{ij} X_{ij} - \sum_j  \bar v_{ij} \Tilde{X}_{kj}\]

Now, the above envy margin is ex-post nonnegative when both of the following inequalities hold.
\begin{align*}
    &\sum_j\bar v_{ij} X_{ij} \geq \bar v_ix_i - \frac{\delta'^2}{8nC}, \text{ 
    and} \\
    &\sum_j\bar v_{ij} \Tilde X_{ij} \leq \bar v_i \Tilde x_k + \frac{\delta'^2}{8nC}.
\end{align*}

We first bound the probability that the first inequality is false. Choose $\epsilon' = \min\{\frac{1}{2}, \frac{\delta'^2}{16C}\} $ so that 

\[\frac{2\epsilon'}{n} \leq \frac{\delta'^2}{8nC}\]

Note that $\bar v_i x_i \leq \frac{2}{n}$, and $\epsilon' > 0$, so 

\[\bar v_i x_i \epsilon' \leq \frac{2\epsilon'}{n} \leq \frac{\delta'^2}{8nC}\]

which implies

\begin{align*}
& Pr[\bar v_i X_i \leq  \bar v_ix_i - \frac{\delta'^2}{8nC} ]\\
\leq\; & Pr [\bar v_i X_i \leq   (1-\epsilon') \bar v_ix_i] 
\end{align*}

Finally, by typicality, we know that $\bar v_{ij} \leq \frac{2}{\mu_l m}$, and $\bar v_ix_i \geq \frac{1}{n}$. Then, noting that $E[\bar v_i X_i] = \bar v_ix_i$, we apply the first Chernoff bound with $z=\frac{2}{\mu_l m}$ and $\bar c_1 = 12\frac{1}{\epsilon'^2\mu_l}$, and for $m \geq \bar c_1 \cdot n \log{n}$ we get

\begin{align*}
    Pr[\bar v_i X_i \leq (1-\epsilon') \bar v_ix_i] &\leq exp(-\frac{\epsilon'^2}{2} \frac{\mu_lm}{2} \frac{1}{n}) \\
    &\leq exp(-3\log{n}).
\end{align*}

By similar reasoning, noting that $\bar v_i\Tilde{x}_k \geq \frac{1}{2n}$ applying the second Chernoff bound with $z=\frac{2}{\mu_l m}$ and $\bar c_2 = \frac{36}{\epsilon'^2\mu_l}$, for $m \geq \bar c_2 \cdot n \log{n}$ we also have

\begin{align*}
Pr[\bar v_{i} \Tilde{X}_{k}  \geq  \bar v_i\Tilde{x}_k  +  \frac{\delta'^2 }{8nC}] &\leq\; Pr[ \bar v_{i} \Tilde{X}_{k} \geq (1+\epsilon')  \bar v_i\Tilde{x}_k] \\
&\leq\; exp(-\frac{\epsilon'^2}{3} \frac{1}{2n} \frac{m \mu_l}{2})\\
&\leq exp(-3\log{n}).
\end{align*}

Now, letting $\bar c = \max\{\bar c_1,\bar c_2\} = \frac{36}{\epsilon'^2\mu_l}$, the probability that agent $i$ envies agent $k$ is at most $2exp(-3\log{n})$. Taking the union bound over all pairs of agents gives us that the probability any agent is envious is at most $2n^2 exp(-3 \log(n)) = \frac{2}{n}$ which goes to 0 as $n$ goes to infinity, so as long as $m \geq \bar c n \log{n}$. Note that if $n$ is bounded, then the probability is a negative exponential in $m$. Thus, in either case, the resulting allocation is envy-free with high probability.
\end{proof}

\section{Extensions to Weights, Groups, and Types} \label{sec:extensions}

\subsection{Weighted Envy-Freeness} \label{subsec:wef}

In the weighted fair division problem, each agent has an associated weight $w_i$. An allocation is Weighted-Envy-Free (WEF) if $\forall i,k \ \frac{v_i(A_i)}{w_i} \geq \frac{v_i(A_k)}{w_k}$. In recent work, \citet{manurangsi2025weighted} showed that if the ratios between weights remain bounded by a constant as $n$ grows to infinity, then a WEF allocation exists with high probability when $m=\Omega(n\log(n)/\log(\log(n)))$. We extend this literature by allowing both to grow simultaneously. In particular, for $W = \sum_i w_i$, when $\frac{W}{w_i} \leq \rho$, our mechanism can be adapted to obtain envy-freeness with high probability when $m=\Omega(\rho \log(n))$, while maintaining truthfulness-in-expectation. One consequence is that we allow for agents with weights linear in $n$ (such as a government agency or union entitled to a constant fraction of the total utility). Moreover, if there are a constant number of such agents, $\rho$ is still $O(n)$, so WEF is achievable in this setting with the same asymptotic bound on $m$.

\begin{restatable}{theorem}{thmweights}\label{thm:weights}
    Let $\rho \geq \frac{W}{w_i}$. In our asymptotic fair division setting, when $m = \Omega(\rho \log(n))$, with high probability there exists a weighted envy-free allocation that can be obtained in polynomial time via a truthful mechanism. 
\end{restatable}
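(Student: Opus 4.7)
The plan is to adapt the PRD Mechanism to the weighted setting by incorporating weights into the fractional allocation formula while leaving the bid-construction step essentially unchanged. Each agent $i$ reports valuations and the mechanism constructs bids $b_{ij}$ with $\sum_j b_{ij}=1$ via Algorithm~\ref{alg:bid_construction}. Algorithm~\ref{alg:fractional} is then modified to output
\[ x_{ij} = \frac{w_i}{WC}\bigl(\log(b_{ij}) + c\bigr) + \frac{w_i}{W}\cdot\frac{WC - \sum_l w_l\bigl(\log(b_{lj}) + c\bigr)}{WC}, \]
so that agent $i$ receives a direct share $w_i(\log b_{ij}+c)/(WC)$ of item $j$ plus a $w_i/W$ fraction of the (weighted) dummy residual. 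One checks that $\sum_i x_{ij}=1$ and $x_{ij}\geq 0$, and the randomized rounding step of Algorithm~\ref{alg:rounding} is used unchanged.

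Truthfulness-in-expectation follows via the separability argument of Theorem~\ref{thm:truthful}: differentiating yields $\partial x_{ij}/\partial b_{ij} = w_i(W-w_i)/(W^2 C b_{ij})$, which does not depend on the bids of the other agents, so each agent's optimization decomposes into the same single-agent convex program whose KKT analysis is identical and yields the scaled-projection bids produced by Algorithm~\ref{alg:bid_construction}. For the weighted fractional envy margin $WEM_{ik} := \bar v_i x_i/w_i - \bar v_i x_k/w_k$, the dummy-agent contributions to $x_{ij}/w_i$ and $x_{kj}/w_k$ are identical and cancel, leaving
\[ WEM_{ik} = \frac{1}{WC}\sum_j \bar v_{ij}\bigl(\log(b_{ij}) - \log(b_{kj})\bigr). \]
This is structurally Lemma~\ref{lem:fem} with $n$ replaced by $W$, so the same Pinsker-plus-projection analysis yields $WEM_{ik}\geq \delta^2/(4WC)$ whenever the valuation profile is typical.

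The main obstacle is the rounding analysis, and it is where $\rho$ enters the bound on $m$. Let $Y_i = \bar v_i A_i/w_i$; the goal is $Y_i\geq Y_k$ ex post for every pair. Summing the weighted margin inequality over $k\neq i$ weighted by $w_k$ and using $\sum_k \bar v_i x_k = 1$ gives $E[Y_i]\geq 1/W + (1-w_i/W)\delta^2/(4WC)$, while the pointwise bound $x_{kj}\leq \min(2w_k/W, 1)$ gives $E[Y_k]\leq 2/W$. As in the proof of Theorem~\ref{thm:ef}, I would introduce an auxiliary allocation $\tilde x_k$ with $\tilde x_{kj}\geq x_{kj}$ and $E[\tilde Y_k] = E[Y_i] - \delta'^2/(4WC)$, so ex-post WEF reduces to concentrating $Y_i$ above $E[Y_i] - \delta'^2/(8WC)$ and $\bar v_i\tilde X_k/w_k$ below $E[\bar v_i\tilde X_k/w_k] + \delta'^2/(8WC)$. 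Each summand in $Y_i$ lies in $[0, 2/(m\mu_l w_i)]$ and $E[Y_i]\geq 1/W$, so Lemma~\ref{lemma:chernoff} gives a deviation probability of at most $\exp\bigl(-c_0\epsilon'^2 m\mu_l\, w_i/W\bigr)$ for a constant $c_0$. The worst-case ratio $w_i/W \geq 1/\rho$ drives the final bound: $m = \Omega(\rho\log n)$ forces this probability below $n^{-3}$, and a union bound over the $O(n^2)$ pairs of agents yields WEF with probability $1-o(1)$. Polynomial-time implementability and truthfulness then follow exactly as in the unweighted case.
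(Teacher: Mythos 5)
Your proposal is correct and follows essentially the same route as the paper's proof: the identical weighted allocation formula with a weight-proportional dummy residual, the same separability argument for truthfulness (so bids are weight-independent and the unweighted KKT analysis applies verbatim), the same $\delta^2/(4WC)$ fractional margin via Pinsker, and the same auxiliary-allocation-plus-Chernoff rounding analysis where $w_i/W \geq 1/\rho$ yields the $m = \Omega(\rho\log n)$ requirement.
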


Recall that an allocation is Weighted-Envy-Free (WEF) if $\forall i,k \ \frac{v_i(A_i)}{w_i} \geq \frac{v_i(A_k)}{w_k}$. We can define the weighted envy margin as
\[EM_{ik} = \frac{v_i(A_i)}{w_i} -\frac{v_i(A_k)}{w_k}.\]

For ease of notation, we denote $W = \sum_i w_i$. We also define $\rho$ to be an upper bound on $\frac{W}{w_i}$. Clearly, $\rho \geq n$. We may then modify Algorithm~\ref{alg:prd} as follows.

\begin{enumerate}
    \item For each good $j$, agent $i$'s bid is $b_{ij}$ and it receives a share of $w_i(\log(b_{ij}) + c)$.
    \item The dummy agent receives a total share of $WC - \sum_i w_i(\log(b_{ij}) + c)$, which it splits among the agents in proportion to their weights. 
    \item All agents receive a fraction of every good proportional to their share, by independently rounding each good. 
\end{enumerate}

\begin{proof}[Proof of Theorem~\ref{thm:weights}]
It is easy to see that this mechanism is incentive compatible for the same reason as before. Agent allocations are separable functions, and scaling by a constant factor $w_i$ does not change an agent's behavior. In fact, this also shows that bids $b_{ij}$ are fully independent of weights. We can express agent $i$'s fractional allocation as follows. 

\[x_{ij} = w_i\frac{\log(b_{ij}) + c}{WC} + \frac{w_i}{W}\frac{WC - \sum_i w_i(\log(b_i) + c)}{WC}.\]

Then, we can write the fractional envy margin

\begin{equation}
    fEM_{ik} = \sum_{j\in M_l} \frac{v_{ij} \log(b_{ij}) - v_{ij} \log(b_{kj})}{WC}
\end{equation}

which is almost identical to the non-weighted case, except we divide by $WC$ instead of $nC$. Since optimal bids are independent of weights, they are identical to the non-weighted case. As such, we have with high probability for all $i\neq k$

\[fEM_{ik} \geq \frac{\delta^2}{4WC}.\]

Let $\delta' \leq \min\{\delta, 2\sqrt{C}\}$ as before. When our allocation is typical, if we allocate each item independently at random, we have

\[\forall i \neq k, \bar v_ix_i \frac{w_k}{w_i} - \frac{\delta'^2w_k}{4WC}\geq \bar v_ix_k.\]

which implies that

\begin{align*} 1 &= \sum_{k} \bar v_i x_k \leq \bar v_i x_i + \sum_{k \neq i} \bigg (\bar v_ix_i \frac{w_k}{w_i} - \frac{\delta'^2w_k}{4WC} \bigg )\\
&= \bar v_i x_i \frac{W}{w_i} - \delta'^2\frac{W - w_i}{4WC}\\
&\implies \frac{\bar v_i x_i}{w_i} \geq \delta'^2 \frac{W - w_i}{4W^2C}+ \frac{1}{W}.
\end{align*}

This is value is at least $\frac{1}{W}$, and as before, no agent may receive more than $\frac{2w_i}{W}$ of any item, so $ \frac{\bar v_i x_i}{w_i}  \leq \frac{2}{W}$. As before, we may define $\Tilde{x_k}$ so that the envy margin bound is exactly attained, and observe that $\delta'^2/4C \leq \frac{1}{2}$, so that 

\[\frac{\bar v_i \Tilde{x_k}}{w_k} \in \Big [ \frac{1}{2W}, \frac{2}{W}\Big ].\]

We choose the same $\epsilon' = min\{\frac{1}{2}, \frac{\delta'^2}{16C}\} $, and note that $\bar v_{i} x_{i} / w_i \leq \frac{2}{W}$ so that 

\[ \frac{\bar v_i x_i}{w_i}  \epsilon'\leq \frac{2\epsilon'}{W} \leq \frac{\delta'^2}{8WC}.\]

By typicality, we note that $\bar v_{ij}/w_i \leq \frac{2 }{\mu_l m w_i}$, and we know $\frac{\bar v_i x_i}{w_i} \geq \frac{1}{W}$. As in the unweighted case, we may define $\bar c_1 = \frac{12}{\epsilon'^2 \mu_l}$. Then, when $m \geq \bar c_1 \rho \log{n}$, we have

\begin{align*}
Pr[\frac{\bar v_{i}}{w_i} X_{i}  \leq \frac{v_ix_i}{w_i} - \frac{\delta'^2}{8WC}] &\leq Pr[\frac{\bar v_{i}}{w_i} X_{i} \leq (1-\epsilon') \frac{v_ix_i}{w_i}] \\
&\leq exp(-\frac{\epsilon'^2}{2} \frac{1}{W} \frac{m\mu_lw_i}{2}) \\
& \leq exp(- 3 \log{n}).
\end{align*}
A similar result holds for the second half, for an appropriately chosen $\bar c_2$. Then, defining $\bar c = \max\{\bar c_1,\bar c_2\}$, and applying the union bound, we have that when $m \geq \bar c \rho ln(n)$, the probability that any agent is envious is 

\[ \leq n^2 exp(-3 \log{n}) = \frac{2}{n}\]

so the resulting allocation after rounding is weighted-envy-free with high probability.
\end{proof}

\subsection{Weighted Envy-Freeness for Groups} \label{subsec:wefg}

Our method may also be applied to the problem of envy-freeness for groups. In this setting, each of the $n$ agents is assigned to a group $g$, which has $n_g$ members and group weight $w_g$. Allocation bundles $A_g^G$ are assigned to groups instead of agents, and every item in a group's allocation is shared as a common good among its members. Accordingly, an agent $i$ in group $g$ is weighted-envy-free if

\[\forall g' \neq g, \frac{v_i(A_g^G)}{w_g} \geq \frac{v_i(A_{g'}^G)}{w_{g'}}\]

and we say an allocation is \textit{weighted envy-free for groups} (WEFG) if this condition is satisfied for all agents. In this setting, \citet{manurangsi2017asymptotic} showed that for groups with equal sizes and equal weights, when $m = \Omega(n \log n)$, WEFG allocations exist with high probability. For a stronger set of distributional assumptions than our previous results, we generalize this result by allowing for a variable number of agents within each group and unequal weights across groups.

\begin{restatable}{theorem}{thmweightsgroups} \label{thm:weightsgroups}
    Suppose $\sD$ is well-behaved and its marginal distributions are i.i.d., and let $\beta \geq n_g$ and $\rho \geq \frac{W}{w_g}$ for every group $g$. Then, if $m = \Omega(\beta^2 ln^3(\beta)\rho ln(n))$, with high probability there exists an allocation that is weighted envy-free for groups that can be obtained in polynomial time via a truthful mechanism.
\end{restatable}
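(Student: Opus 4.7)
The plan is to adapt the weighted PRD Mechanism of Theorem~\ref{thm:weights} to the group setting by having each agent within a group contribute an individual bid that aggregates log-separably into the group's allocation, and then to translate the resulting group-level envy margin into a per-agent WEFG guarantee using the well-behaved i.i.d.\ marginal assumption. Concretely, each agent $i \in g$ reports $v_i$, the mechanism constructs $b_i$ via Algorithm~\ref{alg:bid_construction} on that single report, and the group's fractional share of item $j$ is defined as
\[
x_{gj} \;:=\; \frac{w_g}{n_g W C}\sum_{i \in g}\bigl(\log(b_{ij}) + c\bigr) + D_j,
\]
where $D_j$ is a common dummy term calibrated so that $\sum_g x_{gj} = 1$. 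Algorithm~\ref{alg:rounding} then rounds $x$ to an integer group allocation.

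Truthfulness in expectation follows immediately: since agent $i$'s bid contributes only the additive term $\frac{w_g}{n_g W C}\log(b_{ij})$ to $x_{gj}$, her expected utility $\bar v_i x_g$ decomposes exactly as in the single-agent setting, so the KKT analysis of Theorem~\ref{thm:truthful} identifies the output of Algorithm~\ref{alg:bid_construction} as her unique best response. For the envy margin, let $\tilde b_g := \bigl(\prod_{i \in g} b_i\bigr)^{1/n_g}$ be the geometric-mean group bid; then the fractional WEFG margin of agent $i \in g$ against group $g'$ inherits the form of Lemma~\ref{lem:fem}, with KL divergence measured between $\tilde b_g$ and $\tilde b_{g'}$ but evaluated through $i$'s own normalized valuation $\bar v_i$. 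The crucial new ingredient is that even though each $\tilde b_g$ concentrates at rate $O(1/\sqrt{\beta})$ around a common population bid under well-behaved i.i.d.\ marginals, a careful application of Pinsker's inequality (and the typicality bound on $\sum_j |\bar v_{ij}-\bar v_{kj}|$ for peers) shows that the per-agent residual $\sum_j \bar v_{ij}(\log \tilde b_{gj} - \log \tilde b_{g'j})$ still remains $\Omega(\delta^2/\beta)$ with high probability, giving a fractional WEFG margin of $\Omega(\delta^2/(\beta W C))$ for every agent--group pair.

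The main obstacle will be quantifying how much the within-group averaging erodes the KL gap of Lemma~\ref{lem:fem}, and then absorbing this $\beta$-dependent loss together with three compounded sources of concentration error (typicality of the realized profile, the deviation of each $\bar v_i$ from its group's geometric-mean bid, and the Chernoff concentration under randomized rounding of Algorithm~\ref{alg:rounding}) into a single estimate strong enough to survive a union bound over all $O(n \beta)$ agent--group WEFG constraints. Each Chernoff application has variance budget at most $\bar v_{ij} \leq 2/(\mu_l m)$ times the effective bundle size $\Theta(m w_g / W) \geq \Theta(m/\rho)$, which forces $m = \Omega(\rho \log n)$ per constraint; combining this with the $\beta$-reduction in the envy margin and the cost of simultaneously controlling all $n_g - 1$ intra-group deviations via union-bound-friendly Chernoff tails yields the final bound $m = \Omega(\beta^2 \log^3(\beta)\,\rho \log n)$.
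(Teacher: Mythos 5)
Your mechanism is the same as the paper's: assigning each group member an individual weight $w_g/n_g$, running the weighted log-proportional allocation on individual bids, and pooling, which yields exactly your group share $x_{gj}=\frac{w_g}{n_gWC}\sum_{i\in g}(\log b_{ij}+c)+D_j$; the truthfulness argument by separability is also identical. Your decomposition of the fractional margin into the agent's own diluted KL term (of order $\delta^2/\beta$) plus cross terms between peers and opponents matches the paper's decomposition into contributions $EMC_{iik}$ and $EMC_{aik}$, and your accounting of the final bound is consistent with the paper's.

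The gap is in the one step you assert rather than prove: that the cross terms $\sum_j\bar v_{ij}(\log b_{aj}-\log b_{kj})$ for a peer $a\in g\setminus\{i\}$ and opponent $k\in g'$ are $O(\delta^2/(\beta WC))$. Neither Pinsker's inequality nor a total-variation bound between $\bar v_a$ and $\bar v_k$ delivers this, because the logarithm amplifies differences near the bid floor: two bids can differ by $O(1/m)$ in value but by $\Theta(\log\beta)$ (or, without care, $\Theta(\log m)$) in log, and the naive bound $\frac{2}{\mu m}\sum_j|\log b_{aj}-\log b_{kj}|\le O(C)$ is a constant, far larger than $\delta^2/\beta$. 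A direct Chernoff bound on $\sum_j\log b_{aj}$ also fails because $\log b_{aj}$ has range $\Theta(\log m)$. The paper's fix is to \emph{floor} the raw values at a constant independent of $m$ (namely $v^f_{ij}=\max\{v_{ij},\frac{l}{\beta}\mu(1-\epsilon/\beta)\}$), observe that flooring does not change the constructed bids, and add a third typicality condition (concentration of $\sum_j\log v^f_{ij}$ around a common mean $m\mu_f$, with range only $O(\log\beta)$, which is the source of one of the three $\log\beta$ factors). Because the marginals are i.i.d., this common mean cancels between $a$ and $k$, which is what actually makes the cross terms small. Your intuition about the geometric-mean bids concentrating under i.i.d.\ marginals points in the right direction, but the $O(1/\sqrt{\beta})$ averaging rate you invoke is not the operative mechanism and does not by itself close this step; without the flooring device (or an equivalent one) the plan does not go through.
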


\subsubsection{Assumptions and Algorithm}

To make progress in the groups setting, we require that agent valuations are i.i.d. for each item. We additionally assume as before that $\mu  > 0$ and $E[\frac{1}{\mu}|v_{ij} - v_{kj}|] \geq \delta > 0$, so we may apply some results from previous sections.

When $\mu > 0$, because $v_{ij}$ and $v_{kj}$ are i.i.d., we may bound

\begin{align*}
    E \Big [\frac{1}{\mu}|v_{ij} - v_{kj}| \Big] & \geq E[|v_{ij} - v_{kj}|] \\
    & \geq E[(v_{ij} - v_{kj})^2] \\
    & = E[v_{ij}^2] + E[v_{kj}^2] - 2E[v_{ij}v_{kj}] \\
    & = 2\Big (E[v_{ij}^2] - E[v_{ij}]^2 \Big) \\
    & = 2Var(\sD_i).
\end{align*}

Thus our assumptions hold whenever $\sD_i$ has positive mean and variance. We additionally note that

\[ E[\frac{1}{\mu}|v_{ij} - v_{kj}|] = \frac{1}{\mu}E[|(1-v_{kj})  -(1 - v_{ij})|] \leq 2\frac{1-\mu}{\mu}\]

so $\mu < 1$.

This setting requires some new notation for our algorithm. We will use the superscript $G$ to denote any quantity as it applies to the problem for groups, so $x_g^G$ and $A_{g}^G$ are group fractional and integral allocations respectively. We define $\beta$ to be an upper bound on the maximum group size, and $\rho$ to be an upper bound on $\frac{W}{w_g}$, both of which may vary with $n$. We also note that unlike in the previous section, $\rho$ and $\beta$ may each be individually constant, but $\beta \rho \geq n$. Our proof of Theorem~\ref{thm:weightsgroups} relies on the following algorithm.

\begin{enumerate}
    \item For each group $g$, we assign each agent within that group an \textit{individual weight} $\frac{w_g}{n_g}$.
    \item We choose a sufficiently small\footnote{$l = \delta/25$ no longer suffices. Instead, we set $l = \frac{\delta^2 \mu}{32 \cdot 16}$.} constant $l$, and run the previous algorithm for weighted envy-freeness using the individual weights and lower bound $l/\beta$ to get the agent fractional allocations $x_{ij}$. We then round the above allocation independently for each item, as before.
    \item For each group $g$, we pool together the items given to each agent in that group.
\end{enumerate}

In the analysis of the above algorithm, we will also analyze the fractional allocation of item $j$ to all agents in group $g$, denoted by $x_{gj}^G = \sum_{i\in g} x_{ij}$. This mechanism is truthful in expectation for similar reasons as before. Pooling together items as we do above alters the optimization problem for an agent, since increasing its bid on item $j$ will decrease the dummy's share to each of its other group members. This means that for agent $i$ in group $g$, $x_{gj}^G$ may be expressed as

\[x_{gj}^G = \frac{n-n_g}{n^2C} \log( b_{ij}) + F_j(b_{-i})\]

for some function $F_j$. We observe that this again reduces to optimizing for $\sum \bar v_{ij} \log(b_{ij})$, so the bids produced by the same bid construction algorithm (Algorithm~\ref{alg:bid_construction}) remain optimal for agents.

\subsubsection{Proof Overview}

For the rest of this section we will consider an arbitrary agent $i$ in group $N_1$ with weight $w_1$, and an opposing group $N_2$ with weight $w_2$. Suppose we have agent $a \in N_1$ and agent $k \in N_2$. In this section, we say an event holds \textit{with high probability} if it holds with probability at least $1 - \frac{1}{n^3}$, or exponentially decreasing in $m$, so that our results are preserved under the union bound. Then, WEFG holds when the following condition is satisfied for $i$, $a$, and $k$.

\[\sum_{a \in N_1} \frac{v_i(A_{a})}{w_1} \geq \sum_{k \in N_2} \frac{v_i(A_{k})}{w_2}.\]

We define the envy margin between $i$ and $N_2$ as the difference between the left and right side terms of the above inequality, and define $fEM_{i2}$ as the corresponding fractional envy margin. We want to prove that $EM_{i2} \geq 0$ with high probability.

To do this, we want to analyze each agent's contribution to $EM_{i2}$ individually. Let

\[ EMC_{aik} := \frac{n_1}{w_1} v_i(A_a) - \frac{n_2}{w_2} v_i(A_k) \]

be agent $a$'s contribution to agent $i$'s envy margin, and define $fEMC_{aik}$ analogously. Note that $\frac{w_g}{n_g}$ is each agent's individual weight, so $EMC_{aik}$ is agent $i$'s view of the relative values of $A_a$ and $A_k$ after step 2 of the algorithm. $EM_{i2}$ is then the average of the $EMC_{aik}$ values, i.e.,

\[EM_{i2} = \frac{1}{n_1n_2} \sum_{k \in N_2}\sum_{a\in N_1} EMC_{aik}.\]

For all opponents $k$, we know that $EMC_{iik}$ is simply agent $i$'s envy margin against $k$ before pooling. Thus, from our previous results, we would expect $EMC_{iik}$ to be a positive constant. On the other hand, $EMC_{aik}$ represents agent $i$'s view of the relative value of $A_a$ against $A_k$. As $\sD_a$ is identical to $\sD_k$, and both are independent from agent i's valuations, we would expect this difference to be asymptotically small as $m\rightarrow \infty$.

We formalize this intuition through a series of lemmas, whose proofs together constitute the full proof of Theorem~\ref{thm:weightsgroups}. As $EMC_{iik}$ is very similar to our weighted envy margins from section \ref{subsec:wef}, the proof of the subsequent lemma follows similar lines.

\begin{lemma} \label{lemma:EMC_iik}
    For a given typical valuation profile $(v_i)_{i\in N}$, when $m = \Omega(\beta \log^2(\beta) \rho \log(n))$, $EMC_{iik} \geq \frac{\delta'^2}{8WC}$ with high probability.
\end{lemma}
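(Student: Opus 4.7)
The quantity $EMC_{iik}=\frac{v_i(A_i)}{w_1/n_1}-\frac{v_i(A_k)}{w_2/n_2}$ is exactly agent $i$'s weighted envy margin against $k$ under the individual-weight scheme $w_g/n_g$ introduced in step~1 of the algorithm. Since the sum of individual weights remains $W$, the derivation of the fractional-allocation formula in the proof of Theorem~\ref{thm:weights} carries over almost verbatim: pooling within a group only replaces the coefficient $\frac{w_i(W-w_i)}{W^2 C}$ of $\log(b_{ij})$ with $\frac{(w_1/n_1)(W-w_1)}{W^2 C}$, which is still a positive constant independent of $j$, so the optimization over $b_i$ reduces to maximizing $\sum_j \bar v_{ij}\log(b_{ij})$ subject to the budget constraints exactly as before. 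Consequently, \textsc{Bid-Construction} (with the lower bound $l/\beta$) continues to produce the agent-optimal bids, and the KL-divergence argument in Lemma~\ref{lem:fem} applies.

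First I would establish the fractional analog
\[ fEMC_{iik} \;=\; \frac{1}{WC}\sum_{j}\bigl[\bar v_{ij}\log(b_{ij})-\bar v_{ij}\log(b_{kj})\bigr] \;\geq\; \frac{\delta'^2}{4WC}, \]
for any $\delta'\leq \min\{\delta,2\sqrt{C}\}$, by rerunning the proof of Lemma~\ref{lem:fem} with $b_{min}=l/\beta$ in place of $l/m$. The only change is that the additive slack in $\sum_j|b_{ij}-\bar v_{ij}|$ becomes $O(l)$, which the choice $l=\delta^2\mu/(32\cdot 16)$ from the footnote absorbs into the Pinsker bound. I would then follow the weighted rounding argument of Theorem~\ref{thm:weights}: the fractional individual share satisfies $\frac{n_1}{w_1}\bar v_i x_i\in[1/W,\,2/W]$, and by constructing an auxiliary allocation $\Tilde{x}_k$ stochastically dominating $x_k$ and achieving equality in the envy-margin bound, $\frac{n_2}{w_2}\bar v_i \Tilde{x}_k\in[1/(2W),\,2/W]$. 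It then suffices to show that each of the two Chernoff-type deviations
\[ \sum_j \tfrac{n_1}{w_1}\bar v_{ij} X_{ij} \leq \tfrac{n_1}{w_1}\bar v_i x_i - \tfrac{\delta'^2}{8WC}, \qquad \sum_j \tfrac{n_2}{w_2}\bar v_{ij} \Tilde X_{kj} \geq \tfrac{n_2}{w_2}\bar v_i \Tilde x_k + \tfrac{\delta'^2}{8WC} \]
occurs with probability at most $1/n^3$, which when combined yields $EMC_{iik}\geq \delta'^2/(8WC)$.

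The main obstacle, and the source of the extra $\log^2(\beta)$ factor in the sample complexity, is the weakening of the Chernoff deviation parameter $\epsilon'$. Because $b_{min}=l/\beta$ enlarges the normalizing constant $C$ to $\Theta(\log\beta)$, the admissible choice $\epsilon'=\min\{1/2,\delta'^2/(16C)\}$ becomes $\Theta(1/\log\beta)$, and the Chernoff exponent therefore carries a factor $\epsilon'^2=\Theta(1/\log^2\beta)$. The summands in each Chernoff bound are valued on $[0,z]$ with $z=(n_g/w_g)\cdot(2/(\mu_l m))$ and have mean of order $1/W$, so Lemma~\ref{lemma:chernoff} yields a failure probability of $\exp\bigl(-\Theta(\epsilon'^2\mu_l m w_g/(n_g W))\bigr)$; using $n_g\leq\beta$ and $W/w_g\leq\rho$, this is at most $1/n^3$ whenever $m=\Omega(\beta\log^2(\beta)\rho\log n)$, exactly matching the stated bound.
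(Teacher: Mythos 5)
Your approach is essentially identical to the paper's: rerun Lemma~\ref{lem:fem} with $b_{min}=l/\beta$ to get $fEMC_{iik}\geq \delta'^2/(4WC)$, then apply the weighted rounding/Chernoff argument of Theorem~\ref{thm:weights}, observing that $C=\Theta(\log\beta)$ forces $\epsilon'=\Theta(1/\log\beta)$ and hence the extra $\log^2(\beta)$ factor, with $n_g\leq\beta$ and $W/w_g\leq\rho$ completing the count. One arithmetic slip: you budget a deviation of $\delta'^2/(8WC)$ on \emph{each} side of a fractional margin of $\delta'^2/(4WC)$, which only yields $EMC_{iik}\geq 0$, not the claimed $\delta'^2/(8WC)$; the paper tightens each deviation to $\delta'^2/(16WC)$ (taking $\epsilon'=\min\{\tfrac{1}{2},\delta'^2/(32C)\}$), which is needed because the strictly positive margin must later absorb the $-\delta'^2 n_2/(8WC)$ loss from the cross terms $EMC_{aik}$. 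This changes only constants, not the asymptotics.
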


To prove that $EMC_{aik}$ is small, we will need more structure. Like before, we want to restrict our focus to a set of typical valuations. However, for this problem we will require a strictly stronger notion, which we call \textit{WEFG-Typicality}. Then, similar to our fractional envy margin bound from previous sections, we will first show that $fEMC_{aik}$ is small when valuations are WEFG-typical.

\begin{lemma} \label{lemma:fEMC_aik}
     Let $(v_i)_{i\in N}$ be WEFG-typical. Then
    
    \[\sum_{k \in N_2}\sum_{a\in N_1\setminus\{i\}} fEMC_{aik} \geq -\frac{-\delta'^2 n_2}{16WC}.\]
\end{lemma}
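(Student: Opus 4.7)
I would start by expanding $fEMC_{aik}$ using the weighted fractional allocation formula of Theorem~\ref{thm:weights}, applied with individual weights $w_g/n_g$. For any $a \in N_1$ and $k \in N_2$, this yields
\[x_{aj} = \frac{w_1/n_1}{WC}\log(b_{aj}) + D_{1j}, \qquad x_{kj} = \frac{w_2/n_2}{WC}\log(b_{kj}) + D_{2j},\]
where $D_{1j}$ and $D_{2j}$ collect the terms depending only on the group and on the aggregate dummy contribution at item $j$. A direct computation shows that $\tfrac{n_1}{w_1}D_{1j} = \tfrac{n_2}{w_2}D_{2j}$, so substituting into $fEMC_{aik} = \tfrac{n_1}{w_1}\bar v_i^T x_a - \tfrac{n_2}{w_2}\bar v_i^T x_k$ and summing over $a \in N_1 \setminus \{i\}$ and $k \in N_2$, the shared $D$ terms cancel exactly, leaving the pure log-bid expression
\[\sum_{k,a} fEMC_{aik} = \frac{1}{WC}\Big[n_2\!\!\sum_{a \in N_1\setminus\{i\}}\!\!\sum_j \bar v_{ij}\log(b_{aj}) - (n_1-1)\!\sum_{k \in N_2}\sum_j \bar v_{ij}\log(b_{kj})\Big].\]

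My second step would be to set $S_a := \sum_j \bar v_{ij}\log(b_{aj})$ for each $a \neq i$ and rewrite the target inequality as $(n_1 - 1)(\bar S_1 - \bar S_2) \geq -\delta'^2/16$, where $\bar S_1, \bar S_2$ are the empirical averages of $S_a$ over the disjoint index sets $N_1 \setminus \{i\}$ and $N_2$. Because valuations across agents are i.i.d. in this subsection and $v_i$ is independent of $v_a$ for $a \neq i$, the $S_a$'s are i.i.d. copies of a single random variable given $v_i$, sharing a common conditional expectation. Hence $\bar S_1 - \bar S_2$ has mean zero conditional on $v_i$, and the problem reduces to a two-sample concentration inequality.

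The main obstacle is obtaining sharp enough concentration: $\log(b_{aj})$ a priori ranges over an interval of width $\Theta(\log m)$, so a black-box Hoeffding bound is far too loose for the $\Theta(1/(n_1-1))$ precision we need. I would bake the needed concentration directly into the definition of WEFG-typicality as an extra condition $|\bar S_1 - \bar S_2| \leq \delta'^2/(16(n_1-1))$ for every pair of groups. Verifying that this condition holds with high probability exploits that typical bids cluster near $\bar v_{aj} = \Theta(1/m)$ rather than at the clipping boundaries, so the per-agent variance of $S_a$ is of order $\sum_j \bar v_{ij}^2 \leq \max_j \bar v_{ij} = O(1/(m\mu))$ by Lemma~\ref{lem:typicalbids}. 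A Bernstein-style inequality applied to this controlled variance (with the rare boundary events, where a bid is clipped, handled separately via a union bound over items) yields a deviation of order $\sqrt{\log n/(m\,n_g)}$ per sample mean, which falls below $\delta'^2/(16(n_1-1))$ precisely under the hypothesis $m = \Omega(\beta^2 \log^3(\beta) \rho \log n)$ after union-bounding over all group pairs. Plugging this concentration back into the expression from the first paragraph gives the claim.
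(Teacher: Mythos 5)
Your first step (cancelling the dummy terms and reducing to the weighted log-bid sums) matches the paper, and your second step identifies the right underlying reason the cross-terms are small: conditional on $v_i$, the agents $a\in N_1\setminus\{i\}$ and $k\in N_2$ are exchangeable under the i.i.d.\ assumption, so $\sum_j\bar v_{ij}\log(b_{aj})$ has the same conditional mean for all of them. But the execution has two genuine gaps. First, you prove the lemma by inserting its conclusion (the bound $|\bar S_1-\bar S_2|\le \delta'^2/(16(n_1-1))$, up to normalization) into the definition of WEFG-typicality. The lemma is meant to be a deterministic consequence of the paper's already-fixed typicality conditions W1--W3; with your move, all of the mathematical content is deferred to the claim that your new condition holds with high probability, and that claim is only sketched. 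Second, the sketch does not go through as written: the summands $\bar v_{ij}\log(b_{aj})$ are \emph{not} independent across $j$, because $b_{aj}=\min\{\max\{s_a\bar v_{aj},\,l/(m\beta)\},\,2/(\mu m)\}$ depends on the entire vector $v_a$ through the normalization $\|v_a\|_1$ and the scale factor $s_a$, so neither Bernstein nor Chernoff applies directly to the $j$-sum. Moreover, the clipping events you propose to handle "via a union bound over items" are not rare: a constant fraction of items can fall below the floor (e.g.\ if $\mathcal{D}$ places constant mass near $0$), so they cannot be discarded as exceptional.

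The paper's proof supplies exactly the device you are missing. It introduces floored raw values $v^f_{aj}=\max\{v_{aj},\,\tfrac{l}{\beta}\mu(1-\epsilon/\beta)\}$, which \emph{are} independent across items (the floor is a fixed constant, not data-dependent), adds their concentration as condition W3 of WEFG-typicality ($\sum_j\log(v^f_{aj})$ close to $m\mu_f$, with $\mu_f$ common to all agents by the i.i.d.\ assumption, which is what makes the cross-agent difference small), and then deterministically converts $\log(\bar v^f_{aj})$ into $\log(b_{aj})$: the error is bounded via W1 (the gap between $\|v_a\|_1$ and $m\mu$), the scale factor $s_a\in[1-l/\beta,1]$, and the observation that the floor only moves values that would receive the minimal bid anyway, costing only $O((l+\epsilon)/\beta)$ per item in log-scale. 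To salvage your route you would need to replace your concentration step with one applied to a genuinely independent surrogate for $\log(b_{aj})$ plus a separate deterministic surrogate-to-bid conversion bound --- at which point you have reconstructed W3 and the paper's argument.
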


Then, by applying Chernoff bounds, we will show that the integral allocations are close to the fractional allocations. 

\begin{lemma} \label{lemma:EMC_aik}
    For a given WEFG-typical valuation profile $(v_i)_{i\in N}$, when $m = \Omega(\beta^2  ln^2(\beta) \rho ln(n))$, the total difference between $EMC_{aik}$ and $fEMC_{aik}$ for all $a\neq i$ and all $k$ is no more than $ \frac{\delta'^2n_2}{16WC}$ with high probability.
\end{lemma}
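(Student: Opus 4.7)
The plan is to decompose the total difference into a linear combination of two zero-mean sums — one per group — each being a sum of independent contributions across items, and then apply the multiplicative Chernoff bound (Lemma~\ref{lemma:chernoff}) to each. Throughout, I would pass to normalized valuations $\bar v_i$; this leaves the WEFG condition and all envy-margin quantities unchanged.

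First, I would expand. For each agent $j \neq i$, set $\Delta_j := \bar v_i(A_j) - \bar v_i(x_j)$; exchanging the order of summation gives
\begin{equation*}
\sum_{k \in N_2}\sum_{a \in N_1\setminus\{i\}}\bigl(EMC_{aik} - fEMC_{aik}\bigr) \;=\; \frac{n_1 n_2}{w_1}\, U_1 \;-\; \frac{(n_1-1)\, n_2}{w_2}\, U_2,
\end{equation*}
where $U_g := \sum_{j \in N_g\setminus\{i\}} \Delta_j = \sum_{\ell \in M} \bar v_{i\ell}(Y_\ell^g - y_\ell^g)$, with $Y_\ell^g := \sum_{j \in N_g\setminus\{i\}} A_{j\ell}$ a Bernoulli$(y_\ell^g)$ random variable (because \textsc{Randomized-Rounding} awards each item to exactly one agent) and the $Y_\ell^g$ are mutually independent across items $\ell$. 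It then suffices to prove that $|U_g| \le \frac{\delta'^2 w_g}{32\, n_1\, W C}$ holds with probability at least $1 - n^{-3}$ for each $g \in \{1,2\}$: substituting back into the display (and using $n_1-1 \le n_1$) gives the claimed bound $\frac{\delta'^2 n_2}{16\, WC}$.

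Next, I would apply Lemma~\ref{lemma:chernoff} to $S_g := \sum_\ell \bar v_{i\ell}\, Y_\ell^g$, so that $U_g = S_g - E[S_g]$. By the fractional-allocation bound $\bar v_i(x_j) \le 2(w_j/n_{g(j)})/W$ (analogous to the computation in the proof of Theorem~\ref{thm:weights}) and Lemma~\ref{lem:typicalbids}, we have $E[S_g] \le 2 w_g/W$ and each summand of $S_g$ lies in $[0, 2/(m\mu_l)]$. Setting $\epsilon := \bigl(\frac{\delta'^2 w_g}{32\, n_1 WC}\bigr)/E[S_g]$ and invoking Chernoff with $z = 2/(m\mu_l)$,
\begin{equation*}
\Pr\!\Bigl[|U_g| \ge \tfrac{\delta'^2 w_g}{32\, n_1\, WC}\Bigr] \;\le\; 2\exp\!\Bigl(-\tfrac{\epsilon^2\, E[S_g]\, m\mu_l}{6}\Bigr) \;\le\; 2\exp\!\Bigl(-\Omega\!\bigl(\tfrac{\delta'^4\, m\mu_l}{n_1^2\, C^2\, \rho}\bigr)\Bigr),
\end{equation*}
where the final inequality uses $E[S_g] \le 2w_g/W$ and $W/w_g \le \rho$. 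Substituting $n_1 \le \beta$ and $C = O(\log\beta)$ (which arises in this setting from $c = -\log(l/\beta)$ with $l$ the constant from the footnote), the right-hand side is at most $n^{-3}$ precisely when $m = \Omega(\beta^2 \ln^2\beta \cdot \rho \ln n)$, matching the hypothesis. A union bound over $g \in \{1,2\}$ and the two Chernoff tails completes the argument.

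The delicate point is obtaining the linear $\rho$-scaling rather than $\rho^2$: a Hoeffding-style bound that discarded $E[S_g]$ would replace the variance proxy $\max_\ell \bar v_{i\ell} \cdot E[S_g]$ by the worse $\max_\ell \bar v_{i\ell} \cdot 1$, introducing an extra factor of $W/w_g$ in the exponent and forcing $m$ to scale with $\rho^2$. It is therefore essential to use the multiplicative Chernoff form and exploit the deterministic upper bound $E[S_g] \le 2 w_g/W$, so that one power of $\rho$ cancels. A secondary subtlety is that $\epsilon \le 1$ must be verified — a mild lower bound $E[S_g] = \Omega(w_g/W)$, which follows from WEFG-typicality, suffices — and that the per-item range $z = 2/(m\mu_l)$ itself shrinks with $m$, which is what makes concentration to a target tolerance that does not shrink with $m$ possible in the first place.
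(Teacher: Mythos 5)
Your proposal follows essentially the same route as the paper: the same decomposition of the total deviation into two group-level sums $U_1, U_2$ of independent per-item contributions, the same multiplicative Chernoff application with the crucial use of $E[S_g]\le 2w_g/W$ to keep the exponent linear (rather than quadratic) in $1/\rho$, and the same parameter choices yielding $m=\Omega(\beta^2\log^2(\beta)\rho\log n)$. The one place you diverge is the handling of the lower bound on $E[S_g]$ needed to make the multiplicative bound meaningful: you assert $E[S_g]=\Omega(w_g/W)$ "from WEFG-typicality" without proof, whereas the paper sidesteps this by deterministically shifting the allocations ($\tilde x_a = x_a + \frac{w_1}{n_1W}$, etc.), which preserves the deviation $Y_1^G-y_1^G$ while forcing the shifted mean into $[\frac{n_1}{2W},\frac{3n_1}{W}]$. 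Your asserted bound is in fact slightly too strong — since an agent $k\neq i$ with minimal bids receives only its share of the dummy's redistribution, the bound one can extract from typicality is $\bar v_i x_k=\Omega(w_k/(WC))$, losing a factor of $C=\Theta(\log\beta)$ — but this weaker bound still gives $\epsilon\le 1$, so your argument survives; you should either prove that dummy-share lower bound or adopt the paper's shift, and also note the trivial edge case $n_1=1$ where $U_1$ is an empty sum.
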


Combining all three of these bounds will show that $EM_{i2}$ is non-negative, implying that the resulting allocation is envy-free.

\subsubsection{Bounds on $\mathbf{EMC_{iik}}$}

Lemma~\ref{lemma:EMC_iik} can be first proved without any new definitions.
\begin{proof} We first note that unlike before, the lower bound is now $l/\beta$. This means that $C = \log(\frac{2}{\mu_l}) - \log(l) + \log(\beta)$, and so $C$ is now a function of $\beta$. This does not affect our analysis of optimal bids, as agents cannot influence $\beta$. Moreover, because our assumption set is strictly stronger than before, we know that with high probability the profile of valuations is typical, and so for $\delta' \leq min\{\delta, 2\sqrt{\log(\frac{2}{\mu_l}) - \log(l)}\}$, with high probability

\[fEMC_{iik} \geq  \frac{\delta'^2}{4WC}.\]

Additionally, we may modify the proof from Section~\ref{subsec:wef} by substituting $\frac{v_ix_i}{w_i} - \delta'^2/8WC$ with $\frac{v_ix_i}{w_i} -\delta'^2/16WC$. Then, for $\epsilon' = min\{\frac{1}{2}, \frac{\delta'^2}{32C}\}$, we observe that

\[EMC_{iik} \geq \frac{\delta'^2}{8WC}\]

with probability at least

\[1 - 2 exp(\frac{-\epsilon'^2}{3}\frac{1}{2W}\frac{m \mu_l w_1}{2n_1}).\]

As $\epsilon' = \Omega(\log^{-1}(\beta))$, and $n_1 \leq \beta$, this is true with probability at least $1-\frac{1}{n^3}$ when $m = \Omega(\beta \log^2(\beta) \rho \log(n))$, giving us the desired result.
\end{proof}

\subsubsection{WEFG-Typicality}

Before proving the next two lemmas, we will first introduce \textit{WEFG-typicality}, and show that our new conditions are satisfied with high probability. To motivate our definition for WEFG-typical, we first observe that when $n_1$ increases, the variance of $v_iA^G_1$ grows relative to its mean, so we need some method by which to counteract this. Our method is simply to replace each $\epsilon$ with $\epsilon/\beta$, which will bound the error appropriately. Additionally, we note that when $a\neq i$, $\bar v_i x_a$ is not as well-behaved as before. In particular, $v_{aj}$ may be very small while $v_{ij}$ is large, meaning $v_{ij}\log(v_{aj})$ can approach $-\infty$. To address this, we observe that when $s_a\bar v_{aj}$ is $< l/m$, increasing $v_{aj}$, up to a point, will not actually change the bids. Thus we may impose a constant floor on valuations, which will serve to bound $\log(v_{aj})$.

Let $v^f_{ij} := max \{v_{ij} , \frac{l}{\beta} \mu (1-\epsilon/\beta)\}$, representing \textit{floored values}, and let $\bar v_{ij}^f = \frac{v_{ij}^f}{\|v_i\|_1}$ (note that we are normalizing with respect to $v$, not $v^f$). Importantly, because the floor is independent of $m$, $v_{ij}^f$ is independent across different values of $j$. Let $\mu_{f} := E[\log(v_{ij}^f)]$, and note that $\mu_f$ varies with $\beta$. As $v^f_{ij} \in [\frac{l}{\beta} \mu (1-\epsilon/\beta), 1]$, $\log(v^f_{ij})$ is bounded between $[\log(\frac{l}{\beta}\mu (1-\epsilon/\beta)), 0]$. Denoting $\bar c_f = -ln(\frac{\mu l}{2})$, we have that $|ln(v_{ij}^f)| \leq \bar c_f + \log(\beta)$. We may now formally define WEFG-typicality.

\begin{definition}
    A valuation profile $(v_i)_{i \in N}$ is WEFG-typical if, for some sufficiently small constant $\epsilon$,\footnote{As with $l$, $\epsilon = \delta/25$ is no longer sufficient. Instead, we will require $\epsilon \leq \frac{\delta^2 \mu}{32 \cdot 48}$ and $\epsilon \leq \frac{\delta^2 \mu}{32 \cdot 8} \bar c_f$, where $\bar c_f = -\log(\frac{\mu l}{2})$ and $l = \frac{\delta^2 \mu}{32 \cdot 16}$.} 

    \begin{enumerate}[leftmargin = 8mm]
        \item[W1.] $\forall i$, $\sum_j v_{ij} \in [(1-\epsilon/\beta)m\mu, (1+\epsilon/\beta)m\mu]$,
        \item[W2.] $\forall i \neq k$, $\sum_j \frac{1}{\mu}|v_{ij} - v_{kj}| \geq  (1-\epsilon/\beta)\delta m$, and
        \item[W3.] $\forall i$, $\sum_j \log(v^f_{ij}) \in [(1+\frac{\epsilon}{\beta \max\{\log(\beta), 1\}}) m \mu_f, \\ (1-\frac{\epsilon}{\beta \max\{\log(\beta), 1\}}) m \mu_f]$.
    \end{enumerate}
\end{definition}

The first two conditions are strengthed versions of the typicality assumptions and show that WEFG-typical valuation sets are also typical. The third condition is new and relates to the floored values. Importantly, when the first condition holds, $v_{ij} \leq \frac{l}{\beta}\mu (1-\epsilon/\beta)$ only when $\bar v_{ij} \leq l/(m\beta)$, and $s_i \leq 1$ remains true for WEFG-typical valuation sets. Thus, the floor only affects values for which agents would desire to place minimal bids, and does not increase those values beyond the threshold for minimal bids. Because typical valuation sets also do not attain the upper bound, this means that for WEFG-typical valuations, $b_{ij} = max\{s_i\bar v_{ij}, l/(m\beta)\} = max\{s_i\bar v_{ij}^f, l/(m\beta)\}$.

\begin{lemma}
    When $m = \Omega(\beta^2 ln^3(\beta)ln(n))$, $(v_i)_{i \in N}$ is WEFG-typical \textit{whp}.
\end{lemma}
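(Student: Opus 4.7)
The plan is to apply a Chernoff/Hoeffding bound to each of the three conditions separately and then take a union bound, much as in the proof of the original typicality lemma. Condition W1 is essentially the same as condition T1 except that the accuracy parameter is $\epsilon/\beta$ rather than $\epsilon$, so the failure probability blows up by the expected $\beta^2$ factor. Condition W2 is likewise a reweighted version of T2. Condition W3 is the new ingredient and will be the hardest, because the summands $\log v_{ij}^f$ are logarithms that can be as large as $\Theta(\log\beta)$ in absolute value, which weakens Chernoff by an extra factor of $\log\beta$.

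First I would handle W1 by applying Chernoff to $\sum_j v_{ij}$ with $v_{ij}\in[0,1]$, using the deviation parameter $\epsilon/\beta$ and the lower bound $\mu \geq \mu_l$. This yields a failure probability of at most $2\exp\left(-\tfrac{\epsilon^2 \mu_l m}{3\beta^2}\right)$ for each agent. Taking a union bound over the $n$ agents, the condition holds with probability at least $1-1/n^3$ provided $m = \Omega(\beta^2\log n)$. For W2, the summands $\frac{1}{\mu}|v_{ij}-v_{kj}|$ lie in $[0,1/\mu_l]$ and are independent across $j$, with expectation at least $\delta$. Applying Chernoff with the deviation parameter $\epsilon/\beta$ yields a failure probability exponentially small in $m/\beta^2$, and a union bound over the $O(n^2)$ pairs gives overall failure probability at most $1/n^3$ once $m = \Omega(\beta^2\log n)$.

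For W3, let $X_j := -\log v_{ij}^f$, so each $X_j$ is independent across $j$ and lies in $[0,\, z]$ with $z = \bar c_f + \log\beta = O(\log\beta)$, and $E[\sum_j X_j] = m|\mu_f|$. Apply the multiplicative Chernoff bound (Lemma~\ref{lemma:chernoff}) with the deviation parameter $\epsilon' := \epsilon/(\beta\max\{\log\beta,1\})$ to obtain
\[
\Pr\!\left[\,\Big|\sum_j X_j - m|\mu_f|\Big| > \epsilon' m|\mu_f|\right] \;\leq\; 2\exp\!\left(-\frac{\epsilon'^2\, m|\mu_f|}{3z}\right) \;=\; 2\exp\!\left(-\Omega\!\left(\frac{m|\mu_f|}{\beta^2\log^3\beta}\right)\right).
\]
Since $|\mu_f|$ is bounded below by a positive constant (as the floor is chosen so $v_{ij}^f \in [\Omega(1/\beta), 1]$ and $v_{ij}$ is not almost surely equal to $1$), this failure probability is at most $1/n^3$ once $m = \Omega(\beta^2\log^3\beta\cdot\log n)$. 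A union bound over the $n$ agents keeps the total failure probability below $1/n^2$, and the equivalent formulation $\sum_j \log v_{ij}^f \in [(1+\epsilon')m\mu_f, (1-\epsilon')m\mu_f]$ (which is the statement in W3, given that $\mu_f<0$) follows immediately.

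The main obstacle is W3, both in identifying the correct Bernstein-type parameter $\epsilon'$ to match the $\log\beta$ blow-up in the range of the summands and in verifying that $|\mu_f|$ does not shrink with $\beta$. The former is forced on us by the extra factor of $\log\beta$ in $z$; pairing it with the $\beta$ factor needed to absorb increasing group size gives exactly the $\beta^2\log^3\beta$ dependence in the sample-size requirement. The latter follows from the fact that the floor is only meaningful on a vanishing fraction of the support, so for large $\beta$, $\mu_f$ converges to $E[\log v_{ij}]$, which is a finite negative constant under the well-behaved, i.i.d.\ distributional assumptions of this section. Combining the three union-bounded Chernoff estimates gives the desired conclusion: WEFG-typicality holds with probability $1 - o(1)$ as $n,m\to\infty$ along $m = \Omega(\beta^2\log^3\beta\cdot\log n)$.
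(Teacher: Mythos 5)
Your proposal follows the paper's proof essentially verbatim: Chernoff with deviation $\epsilon/\beta$ for W1 and W2 (costing $m=\Omega(\beta^2\log n)$), and Chernoff for W3 with summands of range $O(\log\beta)$ and deviation $\epsilon/(\beta\max\{\log\beta,1\})$, which is exactly where the $\beta^2\log^3\beta$ factor comes from. The only place to tighten is your justification that $|\mu_f|$ is bounded below by a positive constant: the paper does not argue that $\mu_f$ converges to $E[\log v_{ij}]$ (which may be $-\infty$, since atoms at $0$ are allowed), but instead uses Markov's inequality, $\Pr[v_{ij}>1-\epsilon_1]\leq \mu/(1-\epsilon_1)$, together with $\mu<1$ to conclude $\mu_f\leq \frac{1-\mu-\epsilon_1}{1-\epsilon_1}\ln(1-\epsilon_1)<0$ uniformly in $\beta$; your conclusion is still correct, since $|\mu_f|$ only grows as the floor decreases, but the stated reason should be replaced by this (or a similar) argument.
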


\begin{proof}
By similar applications of the Chernoff bounds as before, we can see that the first two conditions are satisfied with high probability when $m = \Omega(\beta^2 \log(n))$. 

Now, for any $\epsilon_1 > 0$, the probability that $v_{ij} > 1-\epsilon_1$ is no greater than $\frac{\mu}{1-\epsilon_1}$. Choosing $\epsilon_1$ small enough that $1-\epsilon_1$ is above the floor, it is easy to see that this bound holds for $v_{ij}^f$, which means that

\[\mu_f \leq \frac{1-\mu-\epsilon_1}{1-\epsilon_1} ln(1-\epsilon_1).\]

As $\mu < 1$, we may choose $\epsilon_1 \leq \frac{1}{2}(1 - \mu)$, which gives us a lower bound on $|u_f|$ which is $>0$ and independent of $n,m$. We call this value $\mu_{ub}$. By Chernoff, we then have

\begin{align*}
    \sum_j \log(v^f_{ij}) \in &[(1+\frac{\epsilon}{\beta \max\{\log(\beta), 1\}}) m \mu_f, \\ &(1-\frac{\epsilon}{\beta \max\{\log(\beta), 1\}}) m \mu_f]
\end{align*}

with probability

\[\geq 1 - 2exp(-\frac{\epsilon^2}{3(\beta\\max\{\log(\beta), 1\})^2} \frac{m}{\bar c_f+ \log(\beta)} \mu_{ub})\]

which is true with high probability when $m = \Omega(\beta^2 ln^3(\beta)ln(n))$. 
\end{proof}

\subsubsection{Bounds on $\mathbf{fEMC_{aik}}$}

With this new definition, we may now prove Lemma \ref{lemma:fEMC_aik}.

\begin{proof}
We want to show that for WEFG-typical valuations

\[\sum_{k \in N_2}\sum_{a\in N_1\setminus\{i\}} fEMC_{aik} \geq -\frac{-\delta'^2 n_2}{16WC}.\]

To prove this, it will suffice to show that 

\[\forall a \in N_1\setminus\{i\}, k \in n_2, \ \Big | \frac{n_1}{w_1}\bar v_i x_a - \frac{n_2}{w_2}\bar v_i x_k \Big | \leq \frac{\delta^2}{16\beta WC}.\]

By typicality, we know that

\begin{align*}
    \sum_j \log(v^f_{ij}) \in &[(1+\frac{\epsilon}{\beta \max\{\log(\beta), 1\}}) m \mu_f, \\ &(1-\frac{\epsilon}{\beta \max\{\log(\beta), 1\}}) m \mu_f]
\end{align*}
and 
\[ \sum_j v_{aj} \in [(1-\epsilon/\beta)m\mu, (1+\epsilon/\beta)m\mu],\]

By symmetry, an identical result holds for $\bar v_{kj}^f$. Then, adding together the sources of error gives us

\begin{align*}
    \bigg |\sum_j \log(\bar v^f_{aj}) - \log(\bar v^f_{kj}) \bigg |\leq -2\frac{\epsilon}{\beta \max\{\log(\beta), 1\}} m \mu_f + \\
    m \Big( ln ((1+\epsilon/\beta)m\mu) - ln((1-\epsilon/\beta)m\mu)  \Big ).
\end{align*}

As $|\mu_f| \leq \bar c_f + \log(\beta)$, the first term may be bounded by

\[\frac{2\epsilon}{\beta}m + \frac{2\epsilon \bar c_f}{\beta }m.\]

We may bound the latter term by using the maximum value of $\frac{d}{dx} \log(x)$ over the gap. This gives

\begin{align*}
    m \Big( ln ((1+\epsilon/\beta)m\mu) - ln((1-\epsilon/\beta)m\mu)  \Big ) \\ \leq m \frac{2\epsilon  m\mu}{\beta(1-\epsilon/\beta)m\mu} \leq \frac{4\epsilon}{\beta}m.
\end{align*}

This is close to what we want, but we need a similar bound on $b_{aj}$ instead of $\bar v_{aj}^f$. We note

\begin{align*}
\bigg | \sum_j \log( b_{aj})  -\log(b_{kj}) \bigg |  &\leq \bigg | \sum_j \log( \bar v^f_{aj})  -\log(\bar v^f_{kj}) \bigg |  \\
&+ \bigg | \sum_j   \log( \bar v^f_{aj})  - \log(b_{aj}) \bigg | \\
&+ \bigg |\sum_j   \log( \bar v^f_{kj})  - \log(b_{kj})  \bigg |.
\end{align*}

We bound the latter two terms by splitting into cases. Agents $a$ and $k$ are again symmetric, so we just do it for agent $a$. If $b_{aj}$ is minimal, i.e. $b_{aj} = l/(m\beta)$, then $s_a \bar v_{aj}^f \leq l/(m\beta)$, which implies that $\bar v_{aj}^f \in [0, \frac{l}{m\beta(1-l/\beta)}]$. Applying the floor then gives

\begin{align*}
    \bar v_{aj}^f & \in \bigg [\frac{1-\epsilon/\beta}{1+\epsilon/\beta}\frac{l}{m\beta}, \frac{1}{1-l/\beta} \frac{l}{m\beta} \bigg] \\ 
    & = \bigg [\frac{1-\epsilon/\beta}{1+\epsilon/\beta}b_{aj}, \frac{1}{1-l/\beta} b_{aj} \bigg].
\end{align*}

On the other hand, if $b_{aj}$ is not minimal, then $b_{aj} = s_a \bar v^f_{aj}$, so

\[\bar v^f_{aj} \in \bigg [ b_{aj}, \frac{1}{1-l/\beta}  b_{aj}\bigg ].\]

In either case, we observe that $ |\log( \bar v^f_{aj}) - \log(b_{ij})| \leq -\log(1-\epsilon/\beta) + \log(1+\epsilon/\beta) - \log(1-l/\beta) \leq 2\frac{l}{\beta} + 3 \frac{\epsilon}{\beta}$ for $l,\epsilon \leq 1/2$. Adding all our bounds together, we get

\[ \bigg | \sum_j \log( b_{aj})  -\log(b_{kj})  \bigg |  \leq  \frac{2\epsilon \bar c_f}{\beta }m+ \frac{12\epsilon}{\beta}m   + \frac{4l}{\beta}m.\]

For our choice of $\epsilon, l$, we can see that

\[\frac{2\epsilon \bar c_f}{\beta }m+ \frac{12\epsilon}{\beta}m   + \frac{4l}{\beta}m\leq \bigg ( \frac{m}{\beta} \bigg ) \frac{\delta'^2 \mu}{32}.\]

We note that typicality implies each $\bar v_{ij} \leq \frac{2}{\mu m}$. Then,  

\begin{align*}
    \Big | \frac{n_1}{w_1}\bar v_i x_a -\frac{n_2}{w_2} \bar v_i x_k \Big |  &= \Bigg | \frac{\sum_j \bar v_{ij} \log(b_{aj}) - \bar v_{ij} \log(b_{kj})}{WC}\Bigg | \\ 
    &\leq \frac{2}{\mu m} \frac{1}{WC} \bigg ( \frac{m}{\beta} \bigg ) \frac{\delta'^2 \mu}{32} \\
    & = \frac{\delta'^2}{16 \beta WC}.
\end{align*}

Adding these bounds together across all agents $a \in N_1\setminus \{i\}, k \in N_2$ gives

\[ \sum_{k \in N_2}\sum_{a\in N_1\setminus\{i\}} \bigg (\frac{n_1}{w_1}\bar v_{i} x_{a}- \frac{n_2}{w_2}\bar v_{i} x_{k}\bigg ) \geq -\frac{-\delta'^2 n_2}{16WC}.\]

\end{proof}

\subsubsection{Bounds on $\mathbf{EMC_{aik}}$}

Finally, we prove Lemma \ref{lemma:EMC_aik}.

\begin{proof}

We want to bound deviation from $fEMC_{aik}$ via Chernoff bounds. It will be convenient to define

\[y^G_1 = \sum_{a\in N_1\setminus\{i\}} x_a,\  x_2^G = \sum_{k\in N_2} x_k\text{, and}\]
\[Y^G_1 = \sum_{a\in N_1\setminus\{i\}} X_a,\  X_2^G = \sum_{k\in N_2} X_k.\]

Note that we exclude $i$ from the sums for group 1, which is why we denote that sum with $y$ instead of $x$. Then, total deviation from the mean is bounded by the following.
\begin{align*}
    \sum_{k \in N_2}\sum_{a\in N_1\setminus\{i\}} &\bigg|\frac{n_1}{w_1}\bar v_{i} x_{a}-\frac{n_1}{w_1}\bar v_{i} X_{a}  \bigg|\\
    = n_2 &\bigg |\frac{n_1}{w_1}\bar v_i Y_1^G  - \frac{n_1}{w_1}\bar v_i y_1^G\bigg |\text{, and} \\
    \sum_{k \in N_2}\sum_{a\in N_1\setminus\{i\}} &\bigg |\frac{n_2}{w_2}\bar v_{i} x_{k}-\frac{n_2}{w_2}\bar v_{i} X_{k}  \bigg  |\\
    \leq  n_1 &\bigg |\frac{n_2}{w_2}\bar v_i X_2^G  - \frac{n_2}{w_2}\bar v_i x_2^G\bigg |.
\end{align*}

We want to apply Chernoff bounds on $\frac{n_1}{w_1} \bar v_i Y_1^G$ and $\frac{n_2}{w_2} \bar v_i X_2^G$ separately. The lower bounds are again difficult to obtain, so we define $ \Tilde{x}_a = x_a + \frac{w_1}{n_1W}$, $\Tilde{x}_k = x_k + \frac{w_2}{n_2W}$, and define $\tilde{y}^G_1, \tilde{x}^G_2$ analogously. We also recall from Section~\ref{subsec:wef} that $ \frac{n_1}{w_1}\bar v_i x_a,\frac{n_2}{w_2} \bar v_i x_k \leq \frac{2}{W} $, so

\[\frac{n_1}{w_1}\bar v_i \Tilde y^G_1 \in \bigg[ \frac{n_1-1}{W}, \frac{3(n_1-1)}{W} \bigg ] \subset  \bigg[ \frac{n_1}{2W}, \frac{3n_1}{W} \bigg ],\]

\[\frac{n_2}{w_2} \bar v_i \Tilde x_2^G \in \bigg [\frac{n_2}{W}, \frac{3n_2}{W} \bigg ].\]

where we assume $n_1 \geq 2$, as $n_1 = 1$ trivially makes every sum 0. This lets us define 

\[\Tilde X_a = X_a + \frac{w_1}{n_1W},\]
\[\Tilde X_k = X_k + \frac{w_2}{n_2W}.\]

with $\Tilde{Y}_1^G, \Tilde{X}_2^G$ defined analogously. Finally, we note that $\frac{w_g}{n_gW} \leq \frac{1}{n_g}$, so it is easy to see that 

\begin{align*}
    \Tilde Y_1^G, \Tilde X_2^G  & \in [0,2] \\
    Y_1^G - y_1^G & = \Tilde{Y}_1^G - \Tilde{y}_1^G \\
    X_2^G - x_2^G & = \Tilde{X}_2^G - \Tilde{x}_2^G \\
\end{align*}

Then, defining $\epsilon' = min\{\frac{1}{2}, \frac{\delta'^2}{96C \beta}\}$, we note that

\[ \frac{3n_1}{W} \cdot \epsilon' \leq \frac{\delta'^2n_1}{32WC\beta}.\]

Using Chernoff, we can then show

\begin{align*}
   & Pr\Big [ \Big |\frac{n_1}{w_1}\bar v_i Y_1^G  - \frac{n_1}{w_1}\bar v_i y_1^G\Big |\geq   \frac{\delta'^2n_1}{32WC\beta}  \Big  ]\\
    = & Pr\Big [ \Big |\frac{n_1}{w_1}\bar v_i \Tilde Y_1^G  - \frac{n_1}{w_1}\bar v_i \Tilde  y_1^G\Big |\geq   \frac{\delta'^2n_1}{32WC\beta} \Big  ]\\
    \leq  & Pr\Big [ \Big |\frac{n_1}{w_1}\bar v_i \Tilde Y_1^G  - \frac{n_1}{w_1}\bar v_i \Tilde  y_1^G\Big |\geq  \frac{n_1}{w_1}\bar v_i \Tilde  y_1^G \epsilon'\Big ]\\
    \leq & 2 exp(-\frac{\epsilon'^2}{3} \frac{w_1\mu m}{4n_1} \frac{n_1}{2W}) \\
    \leq & 2 exp(-\frac{\epsilon'^2}{3} \frac{w_1}{W} \frac{\mu m}{8}).
\end{align*}

Because $\epsilon' = \Omega(\frac{1}{\beta \log{\beta}})$, $m = \Omega(\beta^2  ln^2(\beta) \rho ln(n))$ suffices to ensure this is true with high probability. Moreover this implies that with high probability

\[n_2 \bigg |\frac{n_1}{w_1}\bar v_i Y_1^G  - \frac{n_1}{w_1}\bar v_i y_1^G\bigg | \leq \frac{\delta'^2n_1n_2}{32WC\beta} \leq \frac{\delta'^2n_2}{32WC}.\]

The bound on $X_2^G$ is essentially symmetric, giving us the same bound $\frac{\delta'^2n_1n_2}{32WC\beta} $, which we still reduce to $\frac{\delta'^2n_2}{32WC}$ Thus, the total error is $\leq \frac{\delta'^2n_2}{16WC}$, as desired.

\end{proof}

Finally, combining Lemmas \ref{lemma:fEMC_aik}
and \ref{lemma:EMC_aik} gives us that

\begin{align*}
    &\sum_{k\in N_2}\sum_{a\in N_1\setminus\{i\}} \bigg (\frac{n_1}{w_1}\bar v_{i} X_{a}- \frac{n_2}{w_2}\bar v_{i} X_{k}\bigg ) \\  &\geq -\frac{\delta'^2}{16WC} - \frac{\delta'^2 n_2}{16WC}
    \geq - \frac{\delta'^2 n_2}{8WC}
\end{align*}

with high probability, which combined with Lemma \ref{lemma:EMC_iik} implies that $EM_{i2} \geq 0$. Combining the worst case asymptotics from each of the three lemmas and the WEFG-typicality bound gives us our final bound of $m = \Omega( \beta^2 \log^3(\beta)\rho \log(n))$.

\subsection{Envy-Freeness with Good Types and Agent Types} \label{subsec:eft}

Finally, we also examine the problem of envy-freeness with a finite number of good types. In an instance of this problem, we have $n$ total agents split into $d$ types, with $n_a, a\in [d]$ of each type. We will assume that the agents are ordered such that $n_1\leq n_2 \leq \dots \leq n_d$. When $d = n$, the agents are fully distinct, and we show asymptotically improved bounds for this case. We also have that all goods are of one of $t$ types, and accordingly define allocations $A_{ib}$ to be the amount of good $b, b \in [t]$, received by agent $i$. Unlike our previous problems, this setting, and our algorithm, involve no probability. Accordingly, we will not pursue any guarantee of truthfulness in this setting.

For fair division with types, Gorantla et. al. \cite{Gorantla_2023} showed the following result.

\begin{restatable}{theorem}{thmgorantla} (\citet{Gorantla_2023})
    Let $1 \leq d \leq n$ and $v_1,...,v_d$ be pairwise distinct additive valuations. Let $n_1,...,n_d$ be positive integers with $\sum_{i=1}^d n_i = n$. Say there are $n_i$ agents with valuations identical to $v_i$ for all $i \in [d]$. Let $r := gcd(n_1,...,n_d)$. Then, there exists $\nu$ such that whenever $m \geq \nu \Vec{1}^t $ and $m \equiv \Vec{0}^t$ (mod r), there exists a complete EF allocation of M.
\end{restatable}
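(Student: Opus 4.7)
The plan is to reduce the problem to a type-level decomposition question: find nonnegative integers $c_{ib}$ (per-capita items of good-type $b$ assigned to each agent of type $i$) with $\sum_i n_i c_{ib} = m_b$ and $\sum_b v_{ib}(c_{ib} - c_{kb}) \geq 0$ for all pairs $(i,k)$. Because all $n_i$ agents of type $i$ share the valuation $v_i$, dividing each type's bundle equally among its members handles within-type envy automatically, and across-type envy reduces to the (weighted) EF inequality above.

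First, I would reduce to the case $r = 1$. Since $r \mid n_i$ for every $i$ and $r \mid m_b$ for every $b$, partition the items of each good-type $b$ into $r$ identical batches of $m_b/r$ items, and split the agents into $r$ cohorts, each holding $n_i/r$ agents of type $i$. Any EF allocation for one (cohort, batch) pair, replicated across all $r$ pairings, is EF overall: within each cohort EF holds by construction, and the two bundles an agent compares across cohorts have identical type-composition (and valuations depend only on composition, since items of a given type are interchangeable), so no across-cohort envy arises. This reduces to $\gcd(n_1,\ldots,n_d) = 1$ with batch sizes $m_b/r$, and I may assume $\nu$ is chosen so $m_b/r$ still satisfies the size requirement.

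Second, in the $\gcd = 1$ case, start from the fractional uniform anchor $c^*_{ib} = m_b/n$, which is envy-free with zero margin, and perturb it using pairwise distinctness. For each ordered pair $(i,k)$ with $v_i$ not proportional to $v_k$, there exist good-types $b,b'$ with $v_{ib}/v_{ib'} > v_{kb}/v_{kb'}$; swapping a small amount of $b$ and $b'$ between types $i$ and $k$ is a zero-sum reallocation that strictly improves both the $(i,k)$ and $(k,i)$ envy margins. A sufficiently small positive combination of such pairwise swaps, indexed over all unordered pairs, yields a fractional allocation satisfying $\sum_i n_i c_{ib} = m_b$ with uniform strict envy margin $\Delta > 0$. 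The Frobenius coin problem then guarantees, since $\gcd(n_i) = 1$, that for every $m_b$ above the Frobenius number $F(n_1,\ldots,n_d)$ the equation $m_b = \sum_i n_i c_{ib}$ has many nonnegative integer solutions; the task is to select solutions close to the perturbed fractional allocation.

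Third, I would round the perturbed allocation to integers while preserving both $\sum_i n_i c_{ib} = m_b$ and EF. Each coordinate can be rounded with error $O(1)$, so the per-capita envy margin shifts by $O\bigl(\sum_b v_{ib}/m_b\bigr)$; for $\nu$ large enough in terms of $\Delta$, $\max_{i,b} v_{ib}$, $t$, and $F(n_1,\ldots,n_d)$, this slack is dominated by $\Delta$, so EF is preserved. The main obstacle is executing the rounding under the simultaneous constraints of integrality, $n_i$-divisibility of each type's bundle, and the per-good-type sum equalities; naive independent rounding typically breaks the sums. I expect the right tool is either an iterative LP-rounding argument exploiting the slack from large $m_b$, or a constructive swap procedure that walks through the integer fibre of $\{c : \sum_i n_i c_{ib} = m_b\}$ while monotonically decreasing a Lyapunov function tracking the minimum envy margin. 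This combinatorial-integrality step, rather than the fractional existence argument, is where the difficulty concentrates.
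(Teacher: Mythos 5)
This statement is quoted from \citet{Gorantla_2023} and is not proved in the paper itself; the closest thing to ``the paper's proof'' is its own quantitative strengthening, Theorem~\ref{thm:types}, whose argument has exactly the skeleton you propose (a fractional type-level allocation with a uniform positive envy margin, followed by an integral rounding that respects $n_a$-divisibility via the Frobenius/Selmer bound). So your outline is aimed correctly, but both of its load-bearing steps have genuine gaps. First, the perturbation argument does not establish a \emph{uniform} strict margin. Starting from the uniform anchor, every ordered-pair margin is exactly zero; a swap between types $i$ and $k$ improves $fEM_{ik}$ and $fEM_{ki}$, but it changes the bundles of $i$ and $k$ and therefore perturbs $fEM_{i'i}$ and $fEM_{i'k}$ for every third type $i'$, possibly negatively. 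Taking a ``sufficiently small positive combination'' scales all of these first-order changes by the same factor and cannot repair their signs, so you have not shown that all $d(d-1)$ margins can be made simultaneously positive. The paper gets this for free from its PRD allocation: Lemma~\ref{lem:fem} certifies $fEM_{ik} \geq \delta^2/(4nC)$ for every ordered pair at once, which is what the proof of Theorem~\ref{thm:types} plugs in. (A related edge case: ``pairwise distinct'' valuations that are proportional to one another defeat your swap entirely, since such types must receive bundles of identical value; distinctness has to be read up to scaling, or those types merged.)

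Second, you explicitly leave the integral rounding step open (``I expect the right tool is\ldots''), and this is where the theorem actually lives. The paper's Theorem~\ref{thm:types} resolves it concretely: round each super-agent's count of each good type to within one item, then remove items until every $A_{ab}$ is a multiple of $n_a$ and at least $\tfrac{2 n_d n_1}{dr}$ items of each type are in the pool, and redistribute the pool in multiples of $n_a$ using Selmer's bound on the Frobenius number for $n_1/r,\ldots,n_d/r$; the induced envy loss is $O\!\left(\tfrac{n}{r\nu}\right)$ per good type and is absorbed by the fractional margin once $\nu = \Omega(n^2 C/(r\delta^2))$. Your $r$-cohort reduction is fine as far as it goes (replicating one identical cohort solution does kill cross-cohort envy), but without the two ingredients above the proof is incomplete.
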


The condition that $m \equiv \Vec{0}^t$ is necessary, because agents with identical and irrational valuations can only be envy-free with identical item sets. Thus, this theorem shows that there always exists an envy-free allocation for any $n$ and sufficiently large $\nu$, unless such an allocation would be clearly impossible. Moreover, \citet{Gorantla_2023} give bounds on the number of items in two special cases. When there are exactly 2 distinct valuations, they show $\nu = O(n^2\sqrt{t}/r\delta_G)$ (where $\delta_G$ is a measure of distinctness that is different from ours), and when there are more than 2 distinct valuations but $t = 2$, they instead get the bound $\nu = O(n^2/r\delta_G)$.

Using the weighted-EF results from the previous sections, we can derive the following theorem, which gives tight bounds in $n$ and $r$ for the general problem\footnote{We treat $t$, $\delta$ as constants, and do not include them in our big-O bounds. We may alternatively write our bound as $O(\frac{n^2\log(t/\delta)}{r\delta^2})$. But, since our distance metric is different, and both distance metrics are intertwined with $t$, it is difficult to meaningfully compare these bounds to those in previous work.}, with an asymptotic improvement when agents are unique. Our mechanism can once again be implemented in polynomial time.

\begin{restatable}{theorem}{thmtypes} \label{thm:types}
    In the types setting, there exists $\nu = O(\frac{n^2}{r})$ s.t. whenever $m \geq \nu \Vec{1}^t $ and $m \equiv \Vec{0}^t$ (mod r), there exists a complete EF allocation of M. If agents are pairwise distinct, this bound improves to $\nu = O(n)$.
\end{restatable}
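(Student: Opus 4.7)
The plan is to reduce the types problem to a weighted envy-freeness problem on $d$ super-agents (one per type) with weights $n_a$, where each super-agent's bundle must be evenly divisible among the $n_a$ agents of its type. Concretely, if each type $a$ receives a bundle $C_a$ with $n_a \mid C_{ab}$ for every item type $b$, then distributing $C_{ab}/n_a$ items of type $b$ to each agent in type $a$ yields a complete integer allocation. Within-type envy-freeness is automatic, since all members of a type receive identical bundles, and across-type envy-freeness becomes exactly the weighted EF condition $v_a \cdot C_a / n_a \geq v_a \cdot C_{a'} / n_{a'}$. Setting $y_{ab} := C_{ab}/n_a$, the problem reduces to finding $y \in \mathbb{Z}_{\geq 0}^{d \times t}$ satisfying $\sum_a n_a y_{ab} = m_b$ for each $b$ together with the (now unweighted-looking) EF condition $v_a \cdot y_a \geq v_a \cdot y_{a'}$ for all $a, a'$.

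For the general $O(n^2/r)$ bound, I would first establish integer feasibility of $\sum_a n_a y_{ab} = m_b$. Writing $n_a = r n_a'$ and $m_b = r m_b'$ with $\gcd(n_a') = 1$, the constraint becomes $\sum_a n_a' y_{ab} = m_b'$. By the Sylvester--Frobenius theorem, every $m_b' \geq O((\max_a n_a')^2) = O((n/r)^2)$ is representable as a nonnegative integer combination of the $n_a'$, which yields the claimed $m_b \geq O(n^2/r)$ threshold. To simultaneously achieve EF, I would begin from the fractional equal-per-agent solution $y_{ab}^\star = m_b / n$ (trivially EF with equality, since all types share a common per-agent fractional bundle) and round to a nearby integer-feasible lattice point. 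Admissible perturbations lie in the kernel of the sum constraint, spanned by directions of the form $n_{a'}' e_a - n_a' e_{a'}$, each of coordinate magnitude $O(n/r)$, and among these one must select a direction that maintains envy-freeness. I anticipate this to be the main obstacle: for adversarial valuations, a careless perturbation may create envy on some pair, so one must argue, via an LP relaxation over the lattice of feasible integer solutions or a potential-function argument, that at least one admissible direction preserves EF across all pairs of types.

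For the distinct-agents case ($d = n$, $n_a = 1$, $r = 1$), integer feasibility is trivial and the sharper $O(n)$ bound is attainable. I would begin with the base allocation $y_{ab}^0 = \lfloor m_b / n \rfloor$, under which every agent receives an identical multiset of items and envy is exactly zero. It then remains to distribute the $r_b = m_b \bmod n$ extras of each type, a total of at most $t(n-1)$ items of bounded value. Since the pairwise valuations are distinct, there is a positive constant of disagreement (an analog of the $\delta$-assumption from Section~\ref{sec:well-behaved}) available in the deterministic setting, and I would invoke a weighted-EF-style rounding argument analogous to the PRD mechanism applied with uniform weights to the extras sub-problem. The technical heart is converting the positive fractional envy margin (cf.\ Lemma~\ref{lem:fem}) into an exact integer-EF assignment of extras via a bipartite matching or Hall-type argument; this succeeds precisely when $m_b \geq O(n)$, so that each agent's base value dominates the fluctuations introduced by the $O(t)$-sized extras bundle any agent can receive.
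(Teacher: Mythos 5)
Your reduction to super-agents with weights $n_a$ and the divisibility constraint $n_a \mid C_{ab}$ is exactly the paper's setup, and your use of the Sylvester--Frobenius theorem to handle integer feasibility mirrors the paper's steps 4--5. The distinct-agents case is also argued along the paper's lines: apply the (weighted) PRD fractional allocation to get a positive fractional envy margin and absorb the $O(1)$-per-type rounding extras, yielding $\nu = O(n)$.

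However, there is a genuine gap in your general-case argument, and it is precisely the point you flag as ``the main obstacle.'' You propose starting from the equal-per-agent fractional solution $y^\star_{ab} = m_b/n$, which has envy margin exactly zero, and then perturbing to a nearby lattice point of the constraint $\sum_a n_a y_{ab} = m_b$. Since the admissible perturbation directions move coordinates by $\Theta(n/r)$ items, each worth up to $1/\nu$ of an agent's total value, the perturbation can cost $\Theta(n/(r\nu))$ in envy margin against every pair --- and with a zero starting margin there is nothing to absorb this loss. No LP or potential argument over the kernel lattice can rescue this in general: for adversarial (but pairwise-distinct) valuations there is no reason a zero-margin point has an EF integer-feasible neighbor. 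The paper's resolution is to \emph{not} start from equal split: it runs the weighted PRD allocation on a block $M_1$ containing $\nu$ copies of each item type, which by the distinctness of the $v_a$'s (via the KL-divergence/Pinsker argument of Lemma~\ref{lem:fem}) yields a \emph{strictly positive} fractional envy margin $\delta^2/(4nC)$ between distinct types. All subsequent losses --- randomized-rounding residue ($O(1/\nu)$), even distribution of $M_2$ ($O(1/\nu)$), and the take-away/reallocate step enforcing divisibility ($O(n/(r\nu))$ per agent, by the Frobenius bound) --- are then charged against this positive margin, and requiring the margin to dominate gives $\nu = O(n^2/r)$. Your proposal is missing this positive-margin starting point in the general case; without it the distinctness assumption is never used there and the argument does not close.
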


We use the following variation on our earlier mechanism.

\begin{enumerate}
    \item Group all agents with the same valuation $v_a$ into 1 super-agent $a$ with weight $n_a$. Split the set of items into $M_1$ and $M_2$, with $M_1$ containing $\nu$ of each item and $M_2$ containing the remainder. 
    \item Fractionally allocate $M_1$ to super-agents, using our algorithm for weighted envy-freeness (Section \ref{subsec:wef})\footnote{We will use $l = \delta/25$ and $b_{max} = \frac{2}{\nu}$, with $\delta$ defined later in this section.}. Round the resulting fractional allocation such that each super-agent gets no more than 1 extra item from each type (i.e., for each item type, round the residue for each super-agent either up or down arbitrarily across all super-agents and types). We denote super-agent $a$'s allocation of item type $b$ as $A_{ab}$.
    \item Allocate $M_2$ as evenly as possible to the original agents, such that each agent gets at most 1 extra good of each type. 
    \item Take away items from each super-agent $a$ until each $A_{ab}$ is a multiple of $n_a$. Then, take away items in multiples of $n_a$ from arbitrary super-agents, until at least $\frac{n_dn_1}{2dr}$ items of each type have been taken.
    \item Reallocate the items previously taken away, while ensuring each super-agent $a$ receives items in multiples of $n_a$. This is always possible with at least $\frac{n_dn_1}{2dr}$ items of each type. Since the total number of items taken is polynomial, this step can be done in polynomial time using a change-making algorithm.
    \item Split each super-agent into its component agents, with each agent with valuation $v_a$ receiving a $\frac{1}{n_a}$-portion of the shared bundle. 
\end{enumerate}

\begin{proof}[Proof of Theorem~\ref{thm:types}]
First, it is clear that this algorithm runs in polynomial time, as it consists of a combination of simple algorithms known to be polytime and simple operations. To prove that this algorithm returns an EF allocation, at every step we will bound the envy-margin between arbitrary agents who are not of the same type.

To do this, it helps to normalize our values slightly differently. We will now assume without loss of generality that each agent's value for $M_1$ is $1$. Accordingly, the value of $A= \Vec{1}^t$ is $\frac{1}{\nu}$ under every agent's valuation. Under this scheme, we will argue that this valuation set is still effectively typical.

First, for distinct agent types $a \neq a'$, we will define

\[\delta = min_{a \neq a'}\sum_{j \in M_1}  |v_{aj} - v_{a'j}|.\]

$\delta$ lower-bounds the total variation distance over $M_1$. Importantly, $\delta$ is scale-invariant, in the sense that if we increase $\nu$, and rescale values so that $M_1$ still has value $1$, $\delta$ remains the same. Thus, $\delta$ is some constant $>0$, which is independent of $\nu$, so we can say

\[\sum_{j \in M_1} |v_{aj} - v_{a'j}| \geq \delta.\]

Additionally, no item has value greater than $\frac{1}{\nu}$, so the upper bound never applies, and $s_i \in [1-l,1]$ by the same argument as before. This shows that in the types setting,

\[\sum_{j \in M_1} |b_{aj} - b_{a'j}| \geq \delta - 4l \geq \frac{\sqrt{2}}{2}\delta.\]

Then, again by the same arguments as before, this implies that for arbitrary agents $i,k$ in different groups, our fractional allocation over $M_1$ induces fractional envy margin

\[fEM_{ik} \geq \frac{\delta^2}{4nC}\]

for $C = - ln(l) + ln(\frac{2 m}{\nu}) =-ln(l) + ln(2t)$. Note that the original bound was expressed in $W$, but in this setting $W=n$.

For the last part of step 2, unlike before, we simply round so that each agent gets or loses at most 1 good for each type. The value of this loss/gain is bounded by $\frac{1}{\nu}$, so we see that

\[EM_{ik} \text{ after step 2} \geq  \frac{\delta^2}{4nC} - \frac{2}{\nu}.\]

Similarly, in step 3, we give away items as evenly as possible, so no agent can get more than 1 `extra' item. Thus

\[EM_{ik} \text{ after step 3} \geq  \frac{\delta^2}{4nC} - \frac{3}{\nu}.\]

Steps 4 and 5 are the trickiest, and to bound their cost, we require the following well-known result on the Frobenius coin problem.

\begin{theorem}[\citet{selmer1977linear}]
    Let $n_1 \leq n_2 \leq \dots \leq n_d$, and let $gcd(n_1,...,n_d) = 1$. Then, any $m$ larger than $ 2 n_d \lfloor \frac{n_1}{d} \rfloor - n_1$ may be partitioned into $d$ values $m_a, a \in [d]$, such that each $m_a$ is a multiple of $n_a$. 
\end{theorem}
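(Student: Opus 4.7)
This statement is a bound on the Frobenius number of $d$ coprime integers $n_1 \le \cdots \le n_d$: setting $q := \lfloor n_1/d \rfloor$, we need $g(n_1, \ldots, n_d) \le 2 n_d q - n_1$, where $g$ denotes the largest integer admitting no nonnegative-integer representation of the form $\sum_i c_i n_i$. The plan is to reduce the question to a covering problem in $\Z/n_1\Z$ and then apply an additive-combinatorics argument.

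First I would set up the standard Ap\'ery-set reduction. For each residue $r \in \{0, 1, \ldots, n_1 - 1\}$, let $w_r$ denote the minimum value of $\sum_{i=2}^d c_i n_i$ over nonnegative integer coefficients with $\sum_i c_i n_i \equiv r \pmod{n_1}$. Since adding multiples of $n_1$ preserves representability within a residue class, $g = \max_r w_r - n_1$, so it suffices to show $w_r \le 2 q n_d$ for every $r$. Because each $n_i \le n_d$, this in turn reduces to the claim: for every $r$ there exist $c_2, \ldots, c_d \ge 0$ with $\sum_i c_i \le 2q$ and $\sum_i c_i n_i \equiv r \pmod{n_1}$.

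Next I would translate this into a sumset-covering claim in $\Z/n_1\Z$. Writing $\hat n_i := n_i \bmod n_1$ and $P_i := \{0, \hat n_i, 2\hat n_i, \ldots, q \hat n_i\}$, define $T := \bigcup_{i=2}^d P_i$. Every element of $T + T$ has the form $a \hat n_i + b \hat n_j$ with $0 \le a, b \le q$, which corresponds exactly to a coefficient vector with $\sum_\ell c_\ell \le 2q$. Hence the goal becomes $T + T = \Z/n_1\Z$. To establish this, I would lower-bound $|T|$ and invoke a sumset inequality. In the generic case the progressions $P_i$ overlap only at $0$, giving $|T| \ge (d-1)q + 1 \ge (d-1)(n_1/d - 1) + 1 > n_1/2$ for $d \ge 2$ and $n_1$ not too small. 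Cauchy--Davenport (for prime $n_1$) or Kneser's theorem (in general) then forces $|T+T| \ge \min(2|T| - 1, n_1) = n_1$, as required.

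The main obstacle is the degenerate regime in which the progressions $P_i$ overlap heavily, so that $|T|$ is much smaller than $(d-1)q + 1$. Such overlaps furnish small-coefficient relations $a \hat n_i \equiv b \hat n_j \pmod{n_1}$, and here the coprimality hypothesis $\gcd(n_1, \ldots, n_d) = 1$ is essential: it prevents the $\hat n_i$ from collectively lying in a proper subgroup of $\Z/n_1\Z$. Kneser's theorem is well suited for this, refining Cauchy--Davenport to $|T+T| \ge 2|T| - |H|$ where $H$ is the stabilizer of $T + T$, and the coprimality condition forces $H$ to be trivial (otherwise every $\hat n_i$ would lie in a proper subgroup). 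Careful bookkeeping of constants, especially at boundary residues and where $\lfloor n_1/d \rfloor$ loses a unit, should complete the argument and deliver the stated bound.
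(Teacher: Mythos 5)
First, note that the paper does not prove this statement at all: it is quoted from \citet{selmer1977linear} as a known upper bound on the Frobenius number and used as a black box, so there is no internal proof to compare yours against; your proposal has to stand on its own.

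It does not, as written. The Ap\'ery-set reduction is fine ($g=\max_r w_r-n_1$, so it suffices that every residue $r$ modulo $n_1$ be reachable by a nonnegative combination of $n_2,\ldots,n_d$ of \emph{value} at most $2qn_d$, with $q=\lfloor n_1/d\rfloor$). The gap is the next step, where you replace ``value at most $2qn_d$'' by the uniform budget ``coefficient sum at most $2q$'' and then demand $T+T=\Z/n_1\Z$. That covering claim is false under the theorem's hypotheses: take $d=3$ and $(n_1,n_2,n_3)=(100,101,201)$, so $\gcd=1$ and $q=33$. Then $\hat n_2=\hat n_3=1$, $P_2=P_3=\{0,1,\ldots,33\}$, hence $T=\{0,\ldots,33\}$ and $T+T=\{0,\ldots,66\}\neq\Z/100\Z$; a residue $r\in\{67,\ldots,99\}$ forces coefficient sum at least $r>2q$. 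Kneser's theorem cannot rescue this: in the example the stabilizer of $T+T$ is trivial and coprimality holds, yet $T+T$ misses a third of the group, because the real obstruction is that $|T|$ can be far below $(d-1)q+1$ (heavily overlapping or short progressions), not that the $\hat n_i$ lie in a proper subgroup. The theorem is nevertheless true there ($w_{99}=99\cdot101=9999\le 2qn_3=13266$) precisely because expensive residues can be reached with many copies of the \emph{cheap} generator $n_2$, which is much smaller than $n_d$ --- a trade-off your bound $w_r\le\bigl(\sum_i c_i\bigr)n_d$ with a uniform budget $2q$ throws away. Any correct proof (Selmer's, or the closely related Erd\H{o}s--Graham argument) must exploit this trade-off between the number of steps a residue requires and the size of the generators used; a plain sumset-covering argument with budget $2q$ cannot deliver the stated bound.
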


In our case, we may simplify this bound to $\frac{2n_dn_1}{d}$. We also note that if we multiply each $n_a$ by $r$, we can still achieve this property by multiplying $m$ by $r$. As $n_dn_1$ would increase by $r^2$, we can say that in general, if $gcd(n_1,...,n_d) = r$, then $\frac{2n_dn_1}{dr}$ items suffice. 

Now, in step 4, we must first take items away from agents, so that at least $\frac{2n_dn_1}{dr}$ items are taken, and $A_{ab}$ is a multiple of $n_a$ for each $a$. To do this, we simply first take every super-agent's extra items, and then take items in multiples of $n_a$ until we reach the threshold of $\frac{2n_dn_1}{dr}$. 

Taking away the extra items again results in each agent losing at most 1 item of each type, so the cost is $\leq \frac{1}{\nu}$. To meet the threshold, in the worst case, only 1 super-agent has any goods of type $b$, and so all $\frac{2n_dn_1}{dr}$ must be taken from them. Moreover, because we take in multiples of $n_a$, it's possible that we take as much as $\frac{2n_dn_1}{dr} + n_a$ from them. As this cost is split between component agents, each agent loses at most

\[ \frac{2n_dn_1}{drn_a} + 1 \leq \frac{2n_d}{dr} + 1 \leq \frac{2n}{dr} + 1\]

of each item in order to hit the threshold. Thus, the total EM after give-away is bounded by

\[EM_{ik} \text{ after step 4} \geq  \frac{\delta^2}{4nC} - \frac{5}{\nu}-\frac{2n}{dr \nu}.\]

We also need to assign the items taken to someone. As each group had $< n_a$ extra items, we took at most $n$ of each type initially. We then took items in multiples of $n_a$ until $\frac{2n_dn_1}{dr}$. Thus, our total pool of items can be no more than $\frac{2n_dn_1}{dr} + n_a + n \leq \frac{2n_dn_1}{dr} + 2n$ of each type. 

Again, in the worst case, all the items go to the same super-agent $a$, increasing the envy of any other agent  by at most

\[\frac{1}{\nu} \bigg( \frac{2n_dn_1}{drn_a} + \frac{2n}{n_a} \bigg)\leq \frac{2n_d}{dr\nu} + \frac{2n}{r\nu} \leq  \frac{2n}{dr\nu} + \frac{2n}{r\nu}\]

where the first inequality follows from $n_a \geq r$. Thus

\[EM_{ik} \text{ after step 5} \geq  \frac{\delta^2}{4nC} - \frac{5}{\nu} - \frac{4n}{dr\nu} - \frac{2n}{r\nu}.\]

Finally, in step 6, bundles split evenly among the agents, so there is no cost. Thus, as long as

\[\frac{\delta^2}{4nC} \geq \frac{5}{\nu} + \frac{4n}{dr\nu} + \frac{2n}{r\nu}\]

each agent will be envy-free. This is true when each right side term is no more than $\delta^2/12nC$, which is true when

\[\nu \geq max\{ \frac{60nC}{\delta^2},  \frac{48n^2C}{rd\delta^2},  \frac{24n^2C}{r\delta^2}\}.\]

We may simply add all 3 terms together to obtain our final bound. Noting that $n \geq r$, this bound is asymptotically $O(\frac{n^2}{r})$ in $n$ and $r$. Although we have chosen a different constant $\delta$, we note that this bound is tight in $n$ and $r$ by the earlier work of \citet{Gorantla_2023}. We also note that when the giveaway step is unnecessary (i.e. when all agents are distinct) then it suffices to have

\[\frac{\delta^2}{4nC} \geq \frac{3}{\nu}\]

in which case $\nu > \frac{12nC}{\delta^2}$ suffices. As we clearly need $m \geq n$ for envy-freeness, this bound is asymptotically tight in $n$.
\end{proof}

\section{Conclusion and Discussion} \label{sec:conclusion}

This work studies the asymptotic fair division problem and provides an affirmative answer to a question posed by \citet{manurangsi2017asymptotic}: does there exist a truthful mechanism that finds an envy-free allocation in the asymptotic setting? We present a randomized mechanism that is truthful-in-expectation and polynomial-time implementable, and extend it to provide many positive results in other settings. Our work raises several interesting questions. Does there exist a polynomial-time mechanism that produces an envy-free allocation with high probability that is \textit{truthful ex-post}. If so, what bounds do we require on the number of items? Can we additionally achieve Pareto optimality alongside truthfulness and envy-freeness? For the complementary setting of asymptotic fair division of chores, it is known that $\Theta(n)$ items are sufficient for envy-freeness (\citet{manurangsi2025chores}) when $n$ is large. Can one extend this bound to the design of truthful mechanisms in the chores setting?

\bibliography{ref}
\clearpage

\end{document}